\documentclass[11pt,a4paper]{article}


\usepackage[utf8]{inputenc}
\usepackage[english]{babel}
\usepackage[T1]{fontenc}
\usepackage{amsmath}
\usepackage{amsfonts}
\usepackage{amssymb}
\usepackage{graphicx}
\usepackage{fullpage}
\usepackage{xcolor}
\usepackage{setspace}
\usepackage{amsthm}
\usepackage{algorithm}
\usepackage{algorithmic}
\usepackage{fancybox}
\usepackage{fancyhdr}
\usepackage{bbm}
\usepackage{multirow}
\usepackage{subfig}
\usepackage[numbers]{natbib}

\numberwithin{equation}{section}
\singlespacing


\newcommand{\echelle}{0.35}
\newcommand{\echellePetit}{0.2}



\newcommand{\ind}[1]{\mathbbm{1}_{#1}}
\newcommand{\indd}{\mathbbm{1}}

\newcommand{\ensemble}[1]{ \{ #1 \} }

\newcommand{\R}{\mathbb{R}}

\newcommand{\E}{\mathbb{E}}
\newcommand{\V}{\mathbb{V}}
\newcommand{\C}{\mathbb{C}}
\newcommand{\Proba}{\mathbb{P}}

\newcommand{\Bcal}{\mathcal{B}}

\newcommand{\Dn}{\mathcal{D}_n}

\newcommand{\Dnbar}{\bar{\mathcal{D}}_n}

\theoremstyle{plain}
\newtheorem{prop}{Proposition}

\newtheorem*{remark}{Remark}

\theoremstyle{definition}

\theoremstyle{remark}

\newsavebox{\fmbox}


\title{Correlation and variable importance in random forests}
\author{Baptiste Gregorutti$^{12}$\footnote{Corresponding author: baptiste.gregorutti@safety-line.fr} , Bertrand Michel$^2$, Philippe Saint-Pierre$^2$}

\date{}

\begin{document}
\maketitle

\begin{center}
{\it
$^1$ Safety Line\\15 rue Jean-Baptiste Berlier, 75013 Paris, France\\
\bigskip
$^2$ Laboratoire de Statistique Th\'eorique et Appliqu\'ee\\
Universit\'e Pierre et Marie Curie\\
4 place Jussieu, 75252 Paris Cedex 05, France
}
\end{center}
\bigskip

\begin{abstract}
This paper is about variable selection with the random forests algorithm in presence of correlated predictors. In high-dimensional regression or classification frameworks, variable selection is a difficult task, that becomes even more challenging in the presence of highly correlated predictors. Firstly we provide a theoretical study of the permutation importance measure for an additive regression model. This allows us to describe how the correlation between predictors impacts the permutation importance. Our results motivate the use of the Recursive Feature Elimination (RFE) algorithm  for variable selection in this context. This algorithm recursively eliminates the variables using permutation importance measure as a ranking criterion. Next various simulation experiments illustrate the efficiency of the RFE algorithm for selecting a small number of variables together with a good prediction error. Finally, this selection algorithm is tested on the Landsat Satellite data from the UCI Machine Learning Repository.

\end{abstract}

\textit{Keywords: Random Forests, Supervised Learning, Variable Importance, Variable Selection.}

\section{Introduction}

In large scale learning problems, in particular when the number of variables is much larger than the number of observations, not all of the variables are relevant for predicting the outcome of interest. Some irrelevant variables may have a negative effect on the model accuracy. Variable selection techniques, also called feature selection or subset selection, involve eliminating irrelevant variables and provide two main advantages. First, a model with a small number of variables is more interpretable. Secondly, the model accuracy might be improved and then avoid the risk of overfitting.

Many studies about variable selection have been conducted during the last decade. In \citet{svm:GE03}, the authors review three existing approaches: filter, embedded and wrapper. A filter algorithm, also known as variable ranking, orders the variables in a preprocessing step and the selection is done independently of the choice of the learning technique. Two classical ranking criteria are the Pearson correlation coefficient and the mutual information criterion as mentioned in the recent survey of \citet{fs:L+12}. The main drawback of this approach is that the choice of the selected variables is not induced by the performance of the learning method. The embedded approach selects the variables during the learning process. The two main examples are the Lasso (\citet{tib96}) for regression problems (the selection process is done through the $\ell_1$ regularization of the least square criterion) and decision trees (the selection is induced by the automatic selection of the splitting variables) such as CART algorithm (\citet{cart}).
A wrapper algorithm uses the learning process to identify an optimal set of variables among all possible subsets (see \citet{fs:KJ97}, \citet{fs:BL97}). The measure of optimality is usually defined by the error rate estimate. As it is impossible to evaluate all variable subsets when the dimension of the data is too large, the wrapper approach consists of using greedy strategies such as forward or backward algorithms. A heuristic is required to select the variables to be introduced or eliminated. This algorithm has been adapted for various contexts in the literature (see for instance \citet{svm:Guyon2002}, \citet{svm:R03}, \citet{rf:SLTW04}, \citet{rf:DA05}, \citet{fs:LS06}, \citet{rf:GPT10}).

A classical issue of variable selection methods is their instability: a small perturbation of the training  sample may completely change the set of selected variables. This instability is  a consequence of the data complexity in high dimensional settings (see \citet{fs:KPH07}, \citet{fs:KKH07}). In particular, the instability of variable selection methods increases when the predictors are highly correlated. For instance, \citet{lasso:B+12} have shown that the lasso tends to discard most of the correlated variables even if they are discriminants and randomly selects one representative among a set of correlated predictors. In the context of random forests, the impact of correlated predictors on variable selection methods has been highlighted by several simulation studies, see for instance \citet{rf:TL11}.  For real life applications it is of first importance to select a subset of variables which is the most stable as possible. One popular solution to answer the instability issue of variable selection methods consists in using bootstrap samples: a selection is done on several bootstrap subsets of the training data and a stable solution is obtained by aggregation of these selections. Such generic approach aims to improve both the stability and the accuracy of the method. This procedure is known as ``ensemble feature selection'' in the machine learning community. Several classification and regression techniques based on this approach have been developed (\citet{svm:B+03} in the context of Support vector regression, \citet{lasso:MB10} with the stability selection). \citet{fs:HGV11} provide a comparison of ensemble feature selections combined with several classification methods.

The random forests algorithm, introduced by \citet{rf:B01}, is a modification of bagging that aggregates a large collection of tree-based estimators.
This strategy has better estimation performances than a single random tree: each tree estimator has low bias but high variance whereas the aggregation achieves a bias-variance trade-off.  The random forests are very attractive for both classification and regression problems. Indeed, these methods have good predictive performances in practice, they work well for high dimensional problems and they can be used with multi-class output, categorical predictors and imbalanced problems. Moreover, the random forests provide some measures of the importance of the variables with respect to the prediction of the outcome variable.

Several studies have used the importance measures in variable selection algorithms (\citet{rf:SLTW04}, \citet{rf:DA05}, \citet{rf:GPT10}). The effect of the correlations on these measures has been studied in the last few years by \citet{rf:AK08}, \citet{rf:S+08}, \citet{rf:NM09}, \citet{rf:N+10}, \citet{rf:N11}, \citet{rf:AA11} and \citet{rf:TL11}.
However, there is no consensus on the interpretation of the importance measures when the predictors are correlated and more precisely there is no consensus on what is the effect of this correlation on the importance measures
(see e.g. \citet{rf:G09}, \citet{rf:N13}). One reason for this is that, as far as we know, no theoretical description of this effect has been proposed in the literature. This situation is particularly unsatisfactory as the importance measures are intensively used in practice for selecting the variables.

The contributions of this paper are two-fold. First, we give some theoretical descriptions of the effect of the correlations on the ranking of the variables produced by the permutation importance measure introduced in \citet{rf:B01}. More precisely, we consider a particular  additive regression model for which it is possible to express the permutation importance in function of the correlations between predictors. The results of this section are validated by a simulation study.

The second contribution of this paper is of algorithmic nature. We take advantage of the previous results to compare wrapper variable selection algorithms for random forests in the context of correlated predictors. Note that most of the variable selection procedures using the random forests are wrapper algorithms. It can be used also as a filter algorithm as in \citet{rf:HU13}. Two main wrapper algorithms are considered in the literature. These two rely on backward elimination strategies based on the ranking produced by the permutation importance measure. The first algorithm computes the permutation importance measures in the full model which produces a ranking of the variables. This ranking is kept unchanged by the algorithm (\citet{rf:SLTW04}, \citet{rf:DA05}, \citet{rf:GPT10}). The second algorithm was first proposed by \citet{svm:Guyon2002} in the context of support vector machines (SVM) and is referred to as Recursive Feature Elimination (RFE). This algorithm requires to update the ranking criterion at each step of a backward strategy: at each step the criterion is evaluated and the variable which minimizes this measure is eliminated. In the random forests setting, although less popular than the first one, this strategy has been implemented for instance in \citet{rf:J+04}. As far as we know, only one study by \citet{rf:SLTW04} compared the two approaches and concluded that the non recursive algorithm provides better results. However, their findings are based on a real life dataset without taking into account the effect of correlated predictors and are not confirmed by simulation studies. Moreover, this position goes against the results we find in Section~\ref{sec:theo}.

A simulation study has been performed to compare the performances of the recursive and the non recursive strategies. Several designs of correlated data encounters in the literature have been simulated for this purpose. As expected, the simulations indicate that the recursive algorithm provides better results.

The paper is organized as follows. We first introduce the statistical background of the random forests algorithm and the permutation importance measure. Section~\ref{sec:theo} provides some theoretical properties of this criterion in the special case of an additive regression model. Section~\ref{sec:algo} describes the RFE algorithm used for variable selection in a random forests analysis. Next, the effect of the correlations on the permutation importance and the good performances of the RFE algorithm in the case of correlated variables are emphasized in a simulation study. This algorithm is finally carried out to analyse satellite image from the Landsat Satellite data from the UCI Machine Learning Repository.

\section{Random forests and variable importance measures}
\label{sec:background}

Let us consider a variable of interest $Y$ and a vector of random variables $\mathbf X = (X_1, \ldots, X_p)$. In the regression setting a rule $\hat f$ for predicting $Y$ is a measurable function taking its values in $\R$. The prediction error of $\hat f$ is then defined by $R(\hat f) = \E \left[ ( \hat f(\mathbf{X}) - Y )^2  \right]$ and our goal is to estimate the conditional expectation $f(\mathbf x) = \E[Y | \mathbf X = \mathbf x]$.

Let $\Dn = \ensemble{(\mathbf X_1, Y_1), \ldots, (\mathbf X_n, Y_n)}$ be a learning set of $n$ i.i.d. replications of $(\mathbf{X},Y)$ where $\mathbf{X}_i = (X_{i1}, \ldots, X_{ip})$.   Since the true prediction error of  $\hat f$ is unknown in practice, we consider an estimator based on the observation of a validation sample $ \bar{\mathcal D}$:
$$\hat{R}(\hat f, \bar{\mathcal D}) = \dfrac{1}{|\bar{\mathcal D}|} \sum_{i: (\mathbf{X}_i, Y_i) \in \bar{\mathcal D}} (Y_i - \hat  f(\mathbf{X}_i) )^2.$$

Classification and regression trees, particularly CART algorithm due to \citet{cart}, are competitive techniques for estimating $f$. Nevertheless, these algorithms are known to be unstable insofar as a small perturbation of the training sample may change radically the predictions. For this reason, \citet{rf:B01} introduced the random forests as a substantial improvement of the decision trees. It consists in aggregating a collection of such random trees, in the same way as the bagging method also proposed by \citet{bagging}: the trees are built over $n_{tree}$ bootstrap samples $\Dn^1, \ldots, \Dn^{n_{tree}}$ of the training data $\Dn$. Instead of CART algorithm, a small number of variables is randomly chosen to determine the splitting rule at each node. Each tree is then fully grown or until each node is pure. The trees are not pruned. The resulting learning rule is the aggregation of all of the tree-based estimators denoted by $\hat{f}_1, \ldots, \hat{f}_{n_{tree}}$. The aggregation is based on the average of the predictions.

In parallel the random forests algorithm allows us to evaluate the relevance of a predictor thanks to variable importance measures. The original random forests algorithm computes three measures, the permutation importance, the z-score and the Gini importance. We focus here on the permutation importance due to \citet{rf:B01}. Broadly speaking, a variable $X_j$ can be considered as important for predicting $Y$ if by breaking the link between $X_j$ and $Y$ the prediction error increases. To break the link between $X_j$ and $Y$, Breiman proposes to randomly permute the observations of the $X_j$'s. It should be noted that the random permutations also breaks the link between $X_j$ and the other covariates. The empirical permutation importance measure can be formalized as follows: define a collection of out-of-bag samples $\ensemble{\Dnbar^t = \Dn \setminus \Dn^t, \, t=1, \ldots, n_{tree}}$ which contains the observations not selected in the bootstrap subsets. Let $\ensemble{\Dnbar^{tj}, \, t=1, \ldots, n_{tree}}$ denote the permuted out-of-bag samples by random permutations of the values of the $j$-th variable in each out-of-bag subset. The empirical permutation importance of the variable $X_j$ is defined by
\begin{equation}
\hat{I}(X_j) = \dfrac{1}{n_{tree}} \sum_{t=1}^{n_{tree}} \bigg[ \hat{R}(\hat{f}_t, \Dnbar^{tj}) - \hat{R} (\hat{f}_t, \Dnbar^{t}) \bigg].
\label{eq:vim}
\end{equation}
This quantity is the empirical counterpart of the permutation importance measure $I(X_j)$, as formalized recently in \citet{rf:Z+12}. Let $\mathbf X_{(j)} = (X_1, \ldots, X_j', \ldots, X_p)$ be the random vector such that $X_j'$ is an independent replication of $X_j$ which is also independent of $Y$ and of all of the others predictors, the permutation importance measure is given by
$$I(X_j) = \E \left[ \left( Y - f(\mathbf X_{(j)} ) \right)^2 \right] - \E \left[ \left( Y - f(\mathbf X )  \right)^2 \right].$$
The permutation of the values of $X_j$ in the definition of $\hat{I}(X_j)$ mimics the independence and the identical copy of the distribution of $X_j$ in the definition of $I(X_j)$.

While the permutation importance measure only depends on the joint distribution of $(\mathbf X,Y$), the empirical importance measure also strongly depends on the algorithm used to estimate the regression function. Consequently, the consistency of $\hat{I}(X_j)$ is relative to which particular algorithm has been chosen for estimating the regression function. 

Only a few results about the consistency of the empirical criterion \eqref{eq:vim} can be found in the literature.  The main contribution on this topic is due to \citet{rf:Z+12}. In this last paper, the algorithm of reinforcement learning trees is introduced and analysed.  Among other consistency results,  it is shown that $\hat{I}(X_j)$  converges to $ I(X_j)$  at an exponential rate,  in the particular case where the importance measures are computed with a purely random forests algorithm (see \citet{biau2008consistency}). Unfortunately, among other hypotheses, \citet{rf:Z+12} assume that the predictors are independent. Thus, this consistency result  can  be hardly  invoked in our context because our field of study is precisely  the effect of correlation on the  permutation importance. 

The consistency of Breiman's random forests algorithm has been shown very recently by~\citet{scornet2014consistency} in the context of the additive regression model \eqref{eq:modeladdit} introduced in the next section. However, this strong result does not  solve the question because the consistency of the random forests is not the same thing as the consistency of the permutation importance.  Moreover, as in \citet{rf:Z+12}, it is assumed by \citet{scornet2014consistency} that the predictors are independent.  As far as we know, the consistency of the empirical  permutation importance, when computed in the context of the Breiman's random forests, is still an open problem.

The permutation importance measure can be used to rank or select the predictors. Among others criteria, the permutation importance measure has shown good performances for leading variable selection algorithms. Nevertheless variable selection is a difficult issue especially when the predictors are highly correlated. In the next section we investigate deeper the properties of the permutation importance measure in order to understand better how this quantity depends on the correlation between the predictors.

\section{Permutation importance measure of correlated variables}
\label{sec:theo}

Previous results about the impact of correlation on the importance measures are mostly based on experimental considerations.
We give a non exhaustive review of these contributions and we compare them with our theoretical results.
\citet{rf:AK08} observe that the Gini measure is less able to detect the most relevant variables when the correlation increases and they mention that the same is true for the permutation importance. The experiments of \citet{rf:AA11} confirm these observations. \citet{rf:GPT10} study the sensitivity of the empirical permutation importance measure to many parameters, in particular they study the sensitivity to the number of correlated variables. Recently, \citet{rf:TL11} identify what they call the ``correlation bias''. Note that it does not correspond to a statistical bias. More precisely, these authors observe two key effects of the correlation on the permutation importance measure: first, the importance values of the most discriminant correlated variables are not necessarily higher than a less discriminant one, and secondly the permutation importance measure depends on the size of the correlated groups.

Since previous studies are mainly based on numerical experiments, there is obviously a need to provide theoretical validations of these observations. We propose below a first theoretical analysis of this issue, in a particular statistical framework. In the rest of the section, we assume that the random vector $(\mathbf X, Y)$ satisfies the following additive regression model:
\begin{equation}
\label{eq:modeladdit}
 Y=\sum_{j=1}^p f_j(X_j) + \varepsilon,
\end{equation}
where $\varepsilon$ is such that $\E[\varepsilon|\mathbf X]=0$, $\E[\varepsilon^2|\mathbf X]$ is finite and the $f_j$'s are measurable functions.  In other words, we require that the regression function can be decomposed into $f(\mathbf x) = \sum_{j=1}^p f_j(x_j)$. In the sequel, $\V$ and $\C$ denote variance and covariance.
\begin{prop}
\label{prop:ResAdditif}
\begin{enumerate}
\item Under model \eqref{eq:modeladdit}, for any $j \in \{1,\dots,p\}$, the permutation importance measure satisfies
$$ I(X_j) = 2\V[f_j(X_j) ] . $$
\item Assume moreover that for some $j \in \{1,\dots,p\}$ the variable $f_j(X_j) $ is centered. Then,
$$I(X_j)  = 2 \C[Y, f_j(X_j)]  - 2  \sum_{k\neq j} \C[f_j(X_j), f_k(X_k)].$$
\end{enumerate}
\end{prop}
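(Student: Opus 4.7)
The overall plan is to work directly from the definition $I(X_j)=\E[(Y-f(\mathbf X_{(j)}))^2]-\E[(Y-f(\mathbf X))^2]$ and exploit the additive structure \eqref{eq:modeladdit}. Observe that under the model, $Y-f(\mathbf X)=\varepsilon$, and since only the $j$-th coordinate is replaced in $\mathbf X_{(j)}$ and the regression function is additive,
$$Y-f(\mathbf X_{(j)})=\varepsilon+f_j(X_j)-f_j(X_j').$$
Squaring, the importance reduces to
$$I(X_j)=\E\big[(f_j(X_j)-f_j(X_j'))^2\big]+2\,\E\big[\varepsilon\,(f_j(X_j)-f_j(X_j'))\big],$$
since the $\varepsilon^2$ term cancels with the baseline $\E[(Y-f(\mathbf X))^2]=\E[\varepsilon^2]$.

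For part 1, I would argue that both summands in the cross term vanish. For the first piece $\E[\varepsilon f_j(X_j)]$, conditioning on $\mathbf X$ and using $\E[\varepsilon\mid\mathbf X]=0$ gives zero. For the second piece $\E[\varepsilon f_j(X_j')]$, the construction of $X_j'$ guarantees independence from $(\mathbf X,Y)$, hence from $\varepsilon$, so it factorises as $\E[\varepsilon]\,\E[f_j(X_j')]=0$. There remains $\E[(f_j(X_j)-f_j(X_j'))^2]$; since $X_j$ and $X_j'$ are i.i.d., expanding the square gives $2\E[f_j(X_j)^2]-2\E[f_j(X_j)]^2=2\V[f_j(X_j)]$, which yields the claimed identity.

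For part 2, I would simply rewrite $2\V[f_j(X_j)]$ using the centering assumption $\E[f_j(X_j)]=0$ as $2\E[f_j(X_j)^2]$, then use the additive model to expand $Y f_j(X_j)$:
$$\C[Y,f_j(X_j)]=\E[Y f_j(X_j)]=\sum_{k=1}^p \E[f_k(X_k)f_j(X_j)]+\E[\varepsilon f_j(X_j)].$$
The noise term vanishes by the same conditioning argument as before. Splitting off the diagonal $k=j$ contributes $\E[f_j(X_j)^2]=\V[f_j(X_j)]$, while for $k\neq j$ the centering of $f_j(X_j)$ turns $\E[f_k(X_k)f_j(X_j)]$ into the covariance $\C[f_j(X_j),f_k(X_k)]$. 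Rearranging isolates $\V[f_j(X_j)]$ as $\C[Y,f_j(X_j)]-\sum_{k\neq j}\C[f_j(X_j),f_k(X_k)]$, and multiplying by $2$ finishes the proof via part 1.

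There is no deep obstacle here; the statement is essentially a bookkeeping exercise. The only subtlety to handle carefully is the role of $X_j'$: one must invoke its joint independence from $(\mathbf X,Y)$ (not merely from $X_j$) to kill the term $\E[\varepsilon f_j(X_j')]$, and rely on the additive form of $f$ to ensure that only the single summand $f_j$ is affected by the permutation, so that all the other $f_k(X_k)$ contributions cancel between $Y-f(\mathbf X_{(j)})$ and $Y-f(\mathbf X)$. Once these two points are made explicit, both identities follow.
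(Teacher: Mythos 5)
Your proof is correct and follows essentially the same route as the paper: kill the cross term with $\varepsilon$ (by conditioning on $\mathbf X$ for $f_j(X_j)$ and by the independence of $X_j'$ for $f_j(X_j')$), reduce $I(X_j)$ to $\E[(f_j(X_j)-f_j(X_j'))^2]=2\V[f_j(X_j)]$ via additivity, and for part 2 expand $\C[Y,f_j(X_j)]$ using the model and the centering assumption. The only (immaterial) difference is that you invoke additivity before handling the cross term, while the paper first shows $I(X_j)=\E[(f(\mathbf X)-f(\mathbf X_{(j)}))^2]$ and then uses the additive structure.
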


\begin{proof}
see Appendix \ref{sec:proofResAdditif}
\end{proof}

In this framework, the permutation importance corresponds to the variance of $f_j(X_j) $, up to a factor 2. The second result of Proposition~\ref{prop:ResAdditif} is the key point to study the influence of the correlation on the permutation measure. This result strongly depends on the additive structure of the regression function $f$ and it seems difficult to give such a simple expression of the permutation importance without assuming this additive form for the regression function. 

If $(\mathbf X, Y)$ is assumed to be a normal vector it is possible to specify the permutation importance measure. Note that in this context the conditional distribution of $Y$ over $\mathbf X$ is also normal and the conditional mean $f$ is a linear function: $f(\mathbf x) = \sum_{j=1}^p \alpha_j x_j$ with $\alpha = (\alpha_1, \ldots, \alpha_p)^t$ a sequence of deterministic coefficients (see for instance \citet{rao73}, p. 522).

\begin{prop}
\label{prop:Gauss}
Consider a Gaussian random  vector $(\mathbf X, Y) \sim \mathcal{N}_{p+1} \bigg(0,
		\begin{pmatrix}
		  	C & \boldsymbol{\tau} \\
			\boldsymbol{\tau}^t & \sigma_y^2
		\end{pmatrix} \bigg)$, where $\boldsymbol{\tau} = (\tau_1, \dots,\tau_p)^t$ with  $\tau_j = \C(X_j, Y) $, $\sigma_y^2 >0$ and $C = [\C(X_j, X_k)]$ is the  non degenerated variance-covariance matrix of $\mathbf X$. Then, for any $j \in \{1,\dots,p\}$,
$$I(X_j) = 2 \alpha_j^2 \V(X_j) = 2 \alpha_j \C(X_j, Y) - 2 \alpha_j \sum_{k\neq j} \alpha_k \C(X_j, X_k),$$
where $\alpha_j = [C^{-1} \boldsymbol{\tau}]_j.$
\end{prop}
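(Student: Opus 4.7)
The plan is to reduce this Gaussian proposition to a direct application of Proposition \ref{prop:ResAdditif}, by exploiting the fact that jointly Gaussian random variables automatically give an additive (in fact linear) regression function. The main work is just identifying the right $f_j$'s and verifying the hypotheses of the previous proposition; the identities themselves then follow by linearity of variance and covariance.

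First I would recall the classical result on Gaussian conditioning (cited from Rao, p.~522): for the specified Gaussian vector $(\mathbf{X}, Y)$ with non-degenerate covariance $C$ on the $\mathbf{X}$-block, the conditional expectation is linear,
\[
f(\mathbf{x}) = \E[Y \mid \mathbf{X} = \mathbf{x}] = \boldsymbol{\alpha}^t \mathbf{x} = \sum_{j=1}^p \alpha_j x_j,
\qquad \boldsymbol{\alpha} = C^{-1}\boldsymbol{\tau}.
\]
Setting $\varepsilon := Y - f(\mathbf{X})$, we automatically have $\E[\varepsilon \mid \mathbf{X}] = 0$, and $\E[\varepsilon^2 \mid \mathbf{X}]$ is finite (in the Gaussian case it is in fact the constant residual variance $\sigma_y^2 - \boldsymbol{\tau}^t C^{-1} \boldsymbol{\tau}$). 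Hence $(\mathbf{X}, Y)$ satisfies the additive regression model \eqref{eq:modeladdit} with the specific choice $f_j(x_j) = \alpha_j x_j$.

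Next I would apply the first part of Proposition \ref{prop:ResAdditif} with these $f_j$'s, which gives immediately
\[
I(X_j) = 2\V[f_j(X_j)] = 2\V[\alpha_j X_j] = 2\alpha_j^2 \V(X_j),
\]
yielding the first equality. For the second equality, I note that since $\mathbf{X}$ is centered, $f_j(X_j) = \alpha_j X_j$ is centered as well, so the hypothesis of part 2 of Proposition \ref{prop:ResAdditif} is satisfied. Applying it and then pulling the scalars $\alpha_j, \alpha_k$ out of the covariances gives
\[
I(X_j) = 2\C[Y, \alpha_j X_j] - 2 \sum_{k \neq j} \C[\alpha_j X_j, \alpha_k X_k]
= 2\alpha_j \C(X_j, Y) - 2 \alpha_j \sum_{k \neq j} \alpha_k \C(X_j, X_k).
\]

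There is no real obstacle here: the only nontrivial input is the Gaussian conditioning formula, which is invoked as a classical fact. The proposition is essentially a corollary of Proposition \ref{prop:ResAdditif} once one recognizes that the Gaussian assumption forces the regression function to be linear, and therefore additive with components $f_j(x_j) = \alpha_j x_j$.
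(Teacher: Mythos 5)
Your proof is correct and takes essentially the same route as the paper: both reduce the statement to Proposition~\ref{prop:ResAdditif} by using the fact that Gaussian conditioning makes the regression function linear, hence additive with $f_j(x_j)=\alpha_j x_j$, and both then read off the two identities. The only inessential difference is that the paper derives $\alpha = C^{-1}\boldsymbol\tau$ by computing $\tau_j = \E[X_j \E[Y\mid \mathbf X]]$ and solving $\boldsymbol\tau = C\alpha$, whereas you invoke the standard conditional-mean formula directly (and you spell out the verification of the hypotheses of Proposition~\ref{prop:ResAdditif}, which the paper leaves implicit).
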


\begin{proof}
see Appendix \ref{sec:proofGauss}
\end{proof}

Note that if we consider a linear function $f: \R^p \mapsto \R$, a random vector $\mathbf X$ of $\R^p$ and a random variable $\varepsilon$ such that $(\mathbf X, \varepsilon)$ is a multivariate normal vector, and if we define the outcome by $Y = f(\mathbf X) + \varepsilon$, then the vector $(\mathbf X, Y)$ is clearly a multivariate normal vector. Thus, the assumption on the joint distribution of $(\mathbf X, Y)$ is in fact mild in the regression framework. Note also that Proposition~\ref{prop:Gauss}  corresponds to the criterion used in \citet{svm:Guyon2002} as a ranking criterion in the SVM-RFE algorithm with $\V(X_j) = 1$.

\medskip

We now discuss the effect of the correlation between predictors on the importance measure by considering Gaussian regression models with various configurations of correlation between predictors:
\paragraph{Case 1: Two correlated variables.} Consider the simple context where $(X_1, X_2, Y) \sim \mathcal{N}_{3} \bigg(0,
	\begin{pmatrix}
	  	C & \boldsymbol{\tau} \\
		\boldsymbol{\tau}^t & 1
	\end{pmatrix}\bigg)$ with
$$C =
	\begin{pmatrix}
	  	1 & c\\
		c & 1
	\end{pmatrix},$$
and $\boldsymbol{\tau}^t = (\tau_0, \tau_0)$ with $\tau_0 \in (-1;1)$. Since the two predictors have the same correlation $\tau_0$ with the outcome, and according to Proposition~\ref{prop:Gauss}, we have for $j \in \ensemble{1,2}$:
		\begin{eqnarray}
		\alpha_j &=& \dfrac{\tau_0}{1+c}.\label{eq:Gauss:case:p2}
		\end{eqnarray}

\noindent Consequently, the permutation importance for both $X_1$ and $X_2$ is
		\begin{equation} \label{ImpCase1}
		I(X_j) = 2 \bigg( \dfrac{\tau_0}{1+c} \bigg)^2, \, j \in \ensemble{1,2}.
		\end{equation}
For positive correlations $c$, the importance of the two variables $X_1$ and $X_2$ decreases when $c$ increases. This result is quite intuitive: when one of the two correlated variables is permuted, the error does not increase that much because of the presence of the other variable, which carries a similar information. The value of the  prediction error after permutation is then close to the value of the prediction error without permutation and the importance is small.

\paragraph{Case 2: Two correlated and one independent variables.}
We add to the previous case an additional variable $X_3$ which is assumed to be independent of $X_1$ and $X_2$, $X_1$ and $X_2$ being unchanged. It corresponds to
$$C = \begin{pmatrix}
		  	1 & c & 0 \\
			c & 1 & 0 \\
			0 & 0 & 1
\end{pmatrix},$$
and $\boldsymbol \tau^t = (\tau_0, \tau_0, \tau_3)$. It can be easily checked that
$$C^{-1} (\tau_0, \tau_0, \tau_3)^t = \bigg( \dfrac{\tau_0}{1+c}, \dfrac{\tau_0}{1+c}, \tau_3 \bigg)^t.$$

\noindent Thus, \eqref{eq:Gauss:case:p2} and  \eqref{ImpCase1} still hold and   $\alpha_3 = \tau_3$. As a consequence, $I(X_3) = 2 \tau_3^2$ can be larger than $I(X_1)$ and $I(X_2)$ if the  correlation $c$ is sufficiency large even if $\tau_0 > \tau_3$. This phenomenon corresponds to the observation made by \citet{rf:TL11}.

\paragraph{Case 3: $p$ correlated variables.} We now consider $p$ correlated variables where
$$C = \begin{pmatrix}
		  	1 		& c 		& \cdots 	& c \\
			c 		& 1 		& \cdots	& c \\
			\vdots 	& \vdots	& \ddots	& \vdots \\
			c 		& c 		& \cdots	& 1
\end{pmatrix},$$
and $\boldsymbol \tau^t = (\tau_0, \dots, \tau_0)$.
In this context the $\alpha_j$'s are equal. The results of Case 1 can be generalized to that situation as shown by the following result:
\begin{prop} \label{lem:invC}
Assume that the correlation matrix $C$ can be decomposed into $C = (1-c) I_p + c \indd \indd^t$, where $I_p$ is the identity matrix and $\indd = (1, \ldots, 1)^t$. Let $\boldsymbol \tau = (\tau_0, \ldots, \tau_0)^t  \in \R^p$. Then for all $j \in \ensemble{1, \ldots, p}$: $$[C^{-1} \boldsymbol \tau]_j = \dfrac{\tau_0}{1-c+pc},$$ and consequently
\begin{equation}
I(X_j) = 2 \left( \dfrac{\tau_0}{1-c+pc} \right)^2.
\label{eq:Gauss:case:p}
\end{equation}
\end{prop}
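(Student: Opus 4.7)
The plan is to reduce the statement to a direct eigenvalue observation about the equicorrelation matrix $C$ and then invoke Proposition~\ref{prop:Gauss}. The key remark is that $\mathbbm{1} = (1,\ldots,1)^t$ is an eigenvector of $C = (1-c) I_p + c \mathbbm{1}\mathbbm{1}^t$: indeed, $C \mathbbm{1} = (1-c)\mathbbm{1} + c \mathbbm{1}(\mathbbm{1}^t \mathbbm{1}) = (1-c+pc)\mathbbm{1}$, so that $\mathbbm{1}$ is an eigenvector with associated eigenvalue $\lambda = 1-c+pc$. Since by assumption $C$ is non-degenerate we have $\lambda \neq 0$, so $C^{-1}\mathbbm{1} = \lambda^{-1} \mathbbm{1}$.

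Because $\boldsymbol \tau = \tau_0 \mathbbm{1}$, it follows immediately that
$$C^{-1} \boldsymbol \tau = \tau_0\, C^{-1} \mathbbm{1} = \frac{\tau_0}{1-c+pc} \mathbbm{1},$$
which proves the first claim. (If one prefers, the same formula can be obtained by writing $C^{-1} = \frac{1}{1-c} \left(I_p - \frac{c}{1-c+pc}\mathbbm{1}\mathbbm{1}^t\right)$ via the Sherman--Morrison identity applied to the rank-one update of $(1-c)I_p$, which also makes the condition $1-c+pc \neq 0$ explicit.)

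For the second claim, I apply Proposition~\ref{prop:Gauss} to obtain $I(X_j) = 2 \alpha_j^2 \V(X_j)$ with $\alpha_j = [C^{-1}\boldsymbol\tau]_j = \tau_0/(1-c+pc)$ and $\V(X_j) = C_{jj} = 1$, giving exactly \eqref{eq:Gauss:case:p}. The only subtlety worth mentioning is checking that the model indeed fits the hypotheses of Proposition~\ref{prop:Gauss}, i.e. that $C$ is non-degenerate; since the eigenvalues of $C$ are $1-c$ (with multiplicity $p-1$) and $1-c+pc$ (with multiplicity $1$), this requires $c \neq 1$ and $c \neq -1/(p-1)$, which are the natural constraints for an admissible equicorrelation structure. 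The main (and only) obstacle is the algebraic inversion of $C$; once the eigenvector identity $C\mathbbm{1} = (1-c+pc)\mathbbm{1}$ is spotted, everything else is immediate.
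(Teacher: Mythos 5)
Your proof is correct, and it takes a genuinely different route from the paper. The paper inverts $C$ explicitly: it posits $M = a I_p + b \indd\indd^t$, solves the system $CM = I_p$ for $a = \tfrac{1}{1-c}$ and $b = \tfrac{-c}{(1-c)(1-c+pc)}$, and then computes $[C^{-1}\boldsymbol\tau]_j = \tau_0(a+pb)$. You never compute $C^{-1}$ at all: you exploit the fact that $\boldsymbol\tau = \tau_0\indd$ and that $\indd$ is an eigenvector of $C$ with eigenvalue $1-c+pc$, so $C^{-1}\boldsymbol\tau = \tfrac{\tau_0}{1-c+pc}\indd$ follows in one line. This is shorter and isolates exactly the structure used (only the action of $C$ on the span of $\indd$ matters, since $\boldsymbol\tau$ lies in it), whereas the paper's computation yields the full inverse, which is more information than the proposition needs; your parenthetical Sherman--Morrison formula in fact recovers precisely the paper's $a$ and $b$, so the two arguments are easily reconciled. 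Your closing remarks are also a small added value: the explicit spectrum ($1-c$ with multiplicity $p-1$, $1-c+pc$ with multiplicity $1$) makes the invertibility condition transparent, a point the paper leaves implicit in the non-degeneracy assumption of Proposition~\ref{prop:Gauss}, and the second claim is obtained exactly as in the paper, by applying Proposition~\ref{prop:Gauss} with $\V(X_j)=1$.
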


\begin{proof}
see Appendix \ref{sec:proofinvC}
\end{proof}

\noindent The proposition shows that the higher the number of correlated variables is, the faster the permutation importance of the variables decreases to zero (see Fig.~\ref{fig:case3}). It confirms the second key observation of \citet{rf:TL11}.
\begin{figure}[ht]
	\begin{center}
	\includegraphics[scale=\echelle]{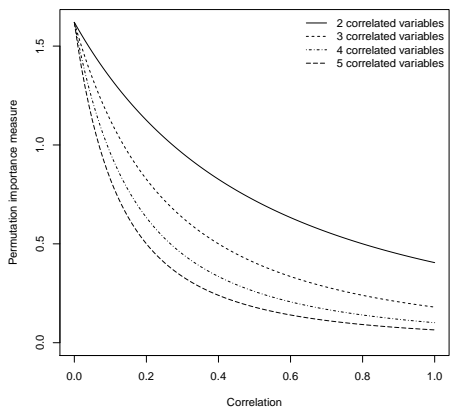}
	\caption{Case 3 - Permutation importance measure \eqref{eq:Gauss:case:p} versus the predictor correlation for $p \in \ensemble{2,3,4,5}$ and for $\tau_0=0.9$.}
	\label{fig:case3}
	\end{center}
\end{figure}
\paragraph{Case 4: One group of correlated variables and one group of independent variables.} Let us now assume that two independent blocks of predictors are observed. The first block corresponds to the $p$ correlated variables $X_1, \ldots, X_p$ of Case 3 whereas the second block is composed of $q$ independent variables $X_{p+1}, \ldots, X_{p+q}$. This case is thus a generalization of Case 2 where

$$C = \begin{pmatrix}
		  	1 		& \cdots 	& c 		& 0 		& \cdots 	& 0\\
			\vdots 	& \ddots	& \vdots 	& \vdots 	& 			& \vdots \\
			c 		& \cdots	& 1			& 0			& \cdots	& 0\\
			0 		& \cdots	& 0			& 1			& \cdots	& 0\\
			\vdots 	& 			& \vdots 	& \vdots 	& \ddots	& \vdots \\
		  	0 		& \cdots 	& 0 		& 0 		& \cdots 	& 1
\end{pmatrix},$$
and $\boldsymbol \tau^t = (\tau_0, \dots, \tau_0, \tau_{p+1}, \dots, \tau_{p+q})$.
It can be checked that the importance of the correlated variables is still given by \eqref{eq:Gauss:case:p} and that $I(X_j) = 2 \tau_j^2$ for $j \in \ensemble{p+1, \ldots, p+q}$. Again, the independent variables may show higher importance values even if they are less informative than the correlated ones.

\paragraph{Case 5: Anti-correlation.} All the previous cases consider positive correlations between the predictors. We now look at the effect of anti-correlation on the permutation measure. Let us consider two predictors $X_1$ and $X_2$ such that $X_2= - \rho X_1 + \varepsilon $ where $X_1$ and $\varepsilon$ are independent and $\rho \in (0,1]$. The correlation between  $X_1$ and $X_2$ equals $-\rho$, assuming that the variances of $X_1$ and $X_2$ are equal to 1. The permutation importance increases when $\rho$ grows to 1 according to Equation \eqref{eq:Gauss:case:p2}. This surprising phenomenon can be explained intuitively: if $\rho$ is close to -1, we need both $X_1$ and $X_2$ in the model to explain $Y$ because they vary in two opposite directions. Consequently, the random permutations of one of the two variables induces a high prediction error. Finally, the permutation importance of these two variables is high for $\rho$ close to -1.

\medskip

\subsection*{Permutation importance measure for classification}

 We close this section by giving few elements regarding permutation importance measures in the classification framework. In this context, $Y$ takes its values in $\ensemble{0,1}$. The error of a rule $f$ for predicting $Y$ is $R(f) = \mathbb{P} \left[ f(\mathbf{X}) \neq Y \right] $. The function minimizing $R$ is the Bayes classifier defined by $f(\mathbf x) = \ind{\eta(\mathbf x) > 0.5}$, where $\eta(\mathbf x)= \E[Y|\mathbf X=\mathbf x] = \Proba[Y=1|\mathbf X=\mathbf x]$ is the regression function. Given a classification rule $\hat f$, we consider its empirical error based on the learning set $\Dn$ and a validation sample $\bar{\mathcal D}$:
$$\hat{R}(\hat f,\bar{\mathcal D}) = \dfrac{1}{|\bar{\mathcal D}|} \sum_{i: (\mathbf{X}_i, Y_i) \in \bar{\mathcal D}} \ind{ \hat f (\mathbf{X}_i) \neq Y_i}\nonumber.$$
Consequently, the permutation importance measure is
$$I(X_j)  =  \Proba \left[ Y \neq f(\mathbf X_{(j)} ) \right] - \Proba \left[Y \neq f(\mathbf X ) \right],$$
and its empirical counterpart is
$$\hat{I}(X_j) = \dfrac{1}{n_{tree}} \sum_{t=1}^{n_{tree}} \bigg[ \hat{R}(\hat{f}_t, \Dnbar^{tj}) - \hat{R} (\hat{f}_t, \Dnbar^{t}) \bigg],$$
as in Equation \eqref{eq:vim}. We can equivalently rewrite the importance $I(X_j)$ as
\begin{eqnarray}
I(X_j) &=& \E[(Y - f(\mathbf X_{(j)}))^2] - \E[(Y - f(\mathbf X))^2]\notag  \\
&=& \E[(Y - \eta(\mathbf X_{(j)}))^2] - \E[(Y - \eta(\mathbf X))^2] \label{eq:ImportanceClassif}
\end{eqnarray}
Of course the regression function does not satisfy the additive model \eqref{eq:modeladdit} but we can consider alternatively the additive logistic regression model:
$$\mbox{Logit}(\eta(\mathbf x)) = \sum_{j=1}^p f_j(X_j).$$
However, the permutation importance measure \eqref{eq:ImportanceClassif} cannot be easily related to the variance terms $\V[f_j(X_j)]$. In fact, this is possible by defining a permutation importance measure  $\tilde I  $ on the odd ratios $\frac{\eta(\mathbf x)}{1-\eta(\mathbf x)}$ rather than on the regression function as follows:
$$ \tilde I ( X_j) = \E \bigg[ \bigg(\log \frac{\eta(\mathbf X)}{1-\eta(\mathbf X)} - \log \frac{\eta(\mathbf X_{(j)})}{1-\eta(\mathbf X_{(j)})}\bigg)^2\bigg].$$
Indeed, straightforward calculations show that
$$ \tilde I ( X_j) = 2  \V[f_j(X_j)].$$
Roughly,  the permutation of $X_j$ has an impact in $I$ only if the permutations change the predicted class (for instance when the odd ratios are close to 1). In contrast the perturbation of the odd ratio due to a permutation of $X_j $ in $\tilde I$  is taken into account in $\tilde I$ whatever the value of the odd ratio. Nevertheless, the calculations we propose for the regression framework can be hardly adapted in this context, essentially because $\tilde I $ cannot be easily expressed in function of the correlations between variables. Moreover, as explained before, $\tilde I$ is less relevant than $I$ for the classification purpose.

\medskip

\medskip

The results of this section show that the permutation importance is strongly sensitive to the correlation between the predictors. Our results also suggest that, for backward elimination strategies, the permutation importance measure should be recomputed each time a variable is eliminated. We study this question in the next section.

\section{Wrapper algorithms for variable selection based on importance measures}
\label{sec:algo}

In this section we study wrapper variable selection algorithms with random forests in the context of highly correlated predictors. In the applications we have in mind, the number of predictors is large and it is then impossible to evaluate the error of all the subsets of variables. Such an exhaustive exploration is indeed ruled out by the computational cost. One solution to this issue, which has been investigated in previous studies, is to first rank the variables according to some criterion and then to explore the subsets of variables according to this ranking. Several papers follow this strategy, they differ to each other first on the way the error is computed and second on the way the permutation importance of variables is updating during the algorithm.

Choosing the error estimator is out of the scope of this paper although various methods are proposed in the literature on this issue. For instance, \citet{rf:DA05} and \citet{rf:GPT10} use the out-of-bag (OOB) error estimate whereas \citet{rf:J+04} use both the OOB and a validation set to estimate the error. Finally, in order to avoid the selection bias described in \citet{fs:AM02}, \citet{rf:SLTW04} use an external 5-fold cross-validation procedure: they produce several variable selections on the 5 subsets of the training data and compute the averaged CV errors. In the sequel, the algorithms are performed by computing two kinds of errors : (i) OOB error which is widely used but often too optimistic as discussed in \citet{rf:B01}, (ii) validation set error which is more suitable but can not be considered in all practical situations.

We focus on the way the permutation importance measure is used in the algorithms. The first approach consists in computing the permutation importance only at the initialization of the algorithm and then to follow a backward strategy according to this ``static'' ranking. The method can be summarized as follows:
\begin{enumerate}
\item Rank the variables using the permutation importance measure
\item Train a random forests \label{algo1:step:train}
\item Eliminate the less relevant variable(s) \label{algo1:step:elim}
\item Repeat steps \ref{algo1:step:train} and \ref{algo1:step:elim} until no further variables remain
\end{enumerate}

This strategy is called Non Recursive Feature Elimination (NRFE). \citet{rf:SLTW04}, \citet{rf:DA05} have developed such backward algorithms. More elaborated algorithms based on NRFE has been proposed in the literature as in \citet{rf:GPT10}. Since we are interested here in the effect of updating the measure importance, we only consider here the original version of NRFE.

The second approach called Recursive Feature Elimination (RFE) is inspired by \citet{svm:Guyon2002} for SVM. It requires an updating of the permutation importance measures at each step of the algorithm. This strategy has been implemented in \citet{rf:J+04}. The RFE algorithm implemented in this paper can be summarized as follows:
\begin{enumerate}
\item Train a random forests \label{algo2:step:train}
\item Compute the permutation importance measure
\item Eliminate the less relevant variable(s) \label{algo2:step:elim}
\item Repeat steps \ref{algo2:step:train} to \ref{algo2:step:elim} until no further variables remain
\end{enumerate}
The two approaches are compared in \citet{rf:SLTW04}. The authors find that NRFE has better performance than RFE algorithm for their real life application. But as far as we know, no simulation studies have been carried out in the literature to confirm their observations. Moreover, this position goes against the theoretical considerations detailed above.

The results of the previous section show that the permutation importance measure of a given variable strongly depends on its correlation with the other ones and thus on the set of variables not yet eliminated in the backward process.
As a consequence, RFE algorithm might be more reliable than NRFE since the ranking by the permutation importance measure is likely to change at each step. In the end, RFE algorithm can select smaller size models than NRFE since the most informative variables are well ranked in the last steps of the backward procedure even if they are correlated.
In addition, by recomputing the permutation importance measure, we make sure that the ranking of the variables is consistent with their use in the current forest.

Let us consider a simple example to illustrate these ideas: we observe two correlated variables highly correlated with the outcome, four independent variables less correlated to the outcome and six irrelevant variables. More precisely, the variables are generated under the assumptions of Proposition~\ref{prop:Gauss}: the correlation between the two relevant variables and the outcome is set to 0.7, the correlation between these variables is 0.9. The correlation between the independent variables and the outcome is 0.6. In addition, the variance of the outcome is set to 2 in order to have a positive-definite covariance matrix in the normal multivariate distribution. Figure~\ref{fig:stepbystep} represents the boxplots of the permutation importance measures at several steps of the RFE algorithm.
At the beginning of the algorithm, the permutation importance measure of the two first variables is lower than the independent ones (V3 to V6) even if they are more correlated to the outcome. Regarding the prediction performances of the selection procedure, one would like to select firstly one of the most informative variables (V1 or V2 in our example).
At the last steps of the algorithm (Fig.~\ref{fig:step5}), one of the most relevant variables is eliminated and there is no more correlations between the remaining variables. The permutation importance of variable V1 becomes larger than the other variables. Consequently, RFE algorithm firstly selects this variable whereas it is selected in fifth position when using NRFE according to the ranking shown in Figure~\ref{fig:step1}.

\begin{figure}[ht]
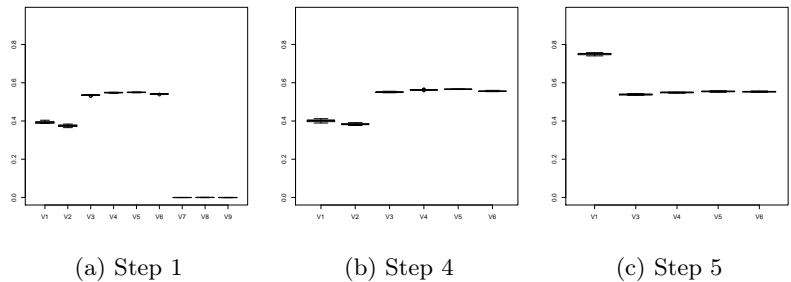

	\begin{center}
	\subfloat[Step 1]{\includegraphics[scale=\echellePetit]{RFE_step1.pdf}\label{fig:step1}}
	\subfloat[Step 4]{\includegraphics[scale=\echellePetit]{RFE_step4.pdf}}
	\subfloat[Step 5]{\includegraphics[scale=\echellePetit]{RFE_step5.pdf}\label{fig:step5}}
	\caption{RFE algorithm step by step with six relevant variables (two correlated and four independent) and three irrelevant variables.}
	\label{fig:stepbystep}
	\end{center}
\end{figure}

In real life applications we usually need to find small size models with good performances in prediction. Thus, it is of first importance to efficiently reduce the effect of the correlations at the end of the backward procedure. By recomputing the variable importances at each step of the algorithm, the RFE algorithm manages to find small models which are efficient in term of prediction.

\begin{remark}
Note that another importance measure has been proposed in \citet{rf:S+08} for variable ranking with random forests in the context of correlated predictors. This importance measure, called conditional importance measure, consists in permuting variables conditionally to correlated ones. This method shows good performances for a small number of predictors. However it is computationally demanding and consequently it can be hardly implemented for problems with several hundreds of predictors.
\end{remark}

\section{Numerical experiments}
\label{sec:num}

In this section, we verify with several experiments that the results proved in Section~\ref{sec:theo} for the permutation importance measure are also valid for its empirical version \eqref{eq:vim}. RFE and  NRFE approaches  are compared for both classification and regression problems.

In the experiments, the number of trees in a  forests is set to $n_{tree}=1000$ and the number of variables randomly chosen for each split of the trees is set to the default value $m_{try} = \sqrt{p}$. For the NRFE algorithm, the permutation importance measure is averaged over 20 iterations as a preliminary ranking of the variables.

\subsection{Correlation effect on the empirical permutation importance measure}
\label{sec:corrEmp}

\begin{figure}
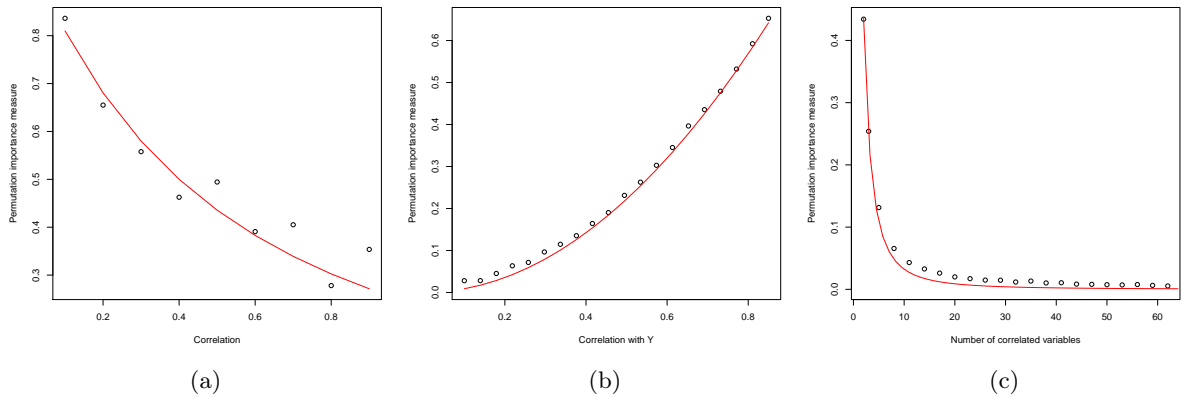

	\begin{center}
	\subfloat[]{\includegraphics[width=0.33\textwidth]{Gauss_ImpVsRho.pdf}\label{fig:gauss1}}
	\subfloat[]{\includegraphics[width=0.33\textwidth]{Gauss_ImpVsTau0.pdf}\label{fig:gauss3}}
	\subfloat[]{\includegraphics[width=0.33\textwidth]{Gauss_ImpVsNcor.pdf}\label{fig:gauss2}}
	\caption{Permutation importance measure versus the correlation (left), the correlation with $Y$ (center) and the number of correlated predictors (right). The curves come from the expression of the permutation importance given in  Proposition~\ref{lem:invC}.}
	\label{fig:gauss}
	\end{center}
\end{figure}

This experiment is carried out under the assumptions of Proposition~\ref{prop:Gauss} and more precisely it corresponds to the regression problem presented in Case 1 and Case 3. Using the notations introduced in Section~\ref{sec:theo}, the variance-covariance matrix $C$ of the $p$ covariates $X_1, \ldots, X_p$ has the form
$$C = \begin{pmatrix}
		  	1 		& c 		& \cdots 	& c \\
			c 		& 1 		& \cdots	& c \\
			\vdots 	& \vdots	& \ddots	& \vdots \\
			c 		& c 		& \cdots	& 1
\end{pmatrix},$$ where $c=\C(X_j, X_k)$, for $j \neq k$. The correlation between the $X_j$'s and $Y$ is denoted by $\tau_0$. Two situations are considered. First, we take $p=2$ (Case 1). The permutation importance measure is given by Proposition \ref{lem:invC}:
\begin{equation} \label{eq:IX1}
 I(X_1) = I(X_2) = 2\bigg( \dfrac{\tau_0}{1+c} \bigg)^2.
\end{equation}
Figure~\ref{fig:gauss1} represents the permutation importance measure of $X_1$ and its empirical counterpart versus the correlation $c$. The correlation $\tau_0$ is set to 0.7 and $c$ is varying between 0 and 1. We observe that the empirical permutation importance measure averaged over 100 simulations shares the same behaviour with the permutation importance measure (solid line in Fig.~\ref{fig:gauss1}) under predictor correlations. This is also highlighted when $\tau_0$ is varying for a fixed value $c$ (see Fig.~\ref{fig:gauss3}). In this case, the correlation $c$ is set to 0.5 and $\tau_0$ is varying between 0.1 and 0.85.

Secondly, we consider $p$ correlated predictors (Case 3) with $\tau_0=0.7$ and $c=0.5$. The permutation importance is given by Proposition \ref{lem:invC}:
\begin{equation} \label{eq:IXp}
I(X_j) = 2 \left( \dfrac{\tau_0}{1-c+pc} \right)^2.
\end{equation}
In Figure~\ref{fig:gauss2}, the permutation importance measure and its empirical version are drawn versus the number of correlated predictors chosen among a grid between 2 and 62. We observe again that the empirical permutation importance measure fits with the permutation importance measure (solid line in Fig.~\ref{fig:gauss2}). 

As mentioned earlier in the paper, the consistency of the empirical importance measure has not been    established yet for the case of correlated predictors. However, this  experiment suggests that  the results   given  in Section~\ref{sec:theo} are also valid for the empirical importance measure.

\subsection{Variable selection for classification and regression problems}

In this section, we compare the performances of the wrapper algorithms RFE and NRFE on five experiments.  More precisely, we illustrate in various situations that when the  predictors are highly correlated, NRFE algorithm provides  a screening of the predictors which is less efficient than the screening provided by RFE.

Experiences 1, 2 and 4 are borrowed from published contributions whereas Experiments 3 and 5 are original simulation problems. With these five experiments we embrace a variety of situations: small or high correlations between predictors, a few or many true predictors (predictors correlated with the outcome $Y$),  linear  links or non-linear links with the outcome. Except the last one, the experiments are carried out in the high dimensional setting. Over the five experiments, four are classification problems, since we were not able to give theoretical results about the permutation variable importance in this framework.

The model error corresponds to the misclassification rate for the classification and to the mean square error in regression case. As mentioned in Section~\ref{sec:algo}, the error is estimated in two different ways: the out-of-bag error embedded in the random forests and the error obtained using a validation set simulated independently. For each algorithm, we represent the validation errors and  the out-of-bag errors  in function of the number of variables in the model. In each case, we then select the minimum error model. Finally the error of the model selected  is evaluated on a test set simulated independently (see for instance  \cite{reunanen2003overfitting}). Note that when comparing the performances of both algorithms for fixed number a variables, it is not necessary to evaluate the error on an additional test sample.
The procedure is repeated 100 times to reduce the estimation variability. We also provide for each experiment the boxplots of the initial permutation importance measures, that is the importance measures used by NRFE for ranking the variables.

\subsubsection*{Experiment 1: small correlation between few predictors.}
The first problem involves only six true predictors in a classification problem inspired by \citet{rf:GPT10}. The aim of this experiment is to check that RFE and NRFE have similar performances in this simple context where there is only small correlation between a few true predictors.

\begin{figure}
	\begin{center}
	\includegraphics[scale= \echelle]{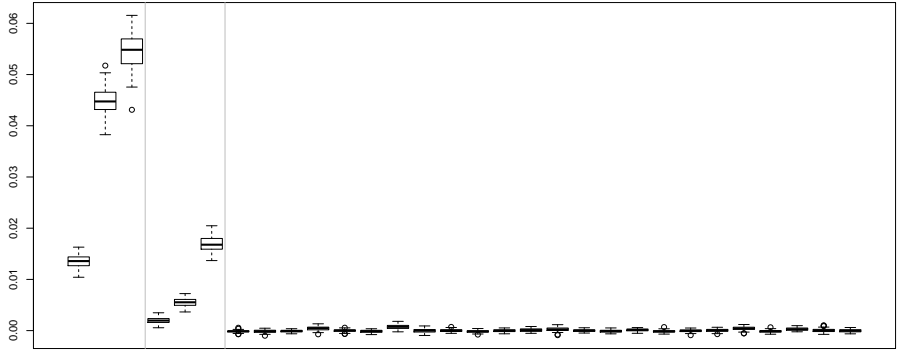}
	\caption{ Exp. 1 - Boxplots of the initial permutation importance measures. The boxplots of the  6 relevant variables together with the boxplots of the  24 first irrelevant variables are displayed.}
	\label{fig:weston_boxplots}
	\end{center}
\end{figure}

\noindent\textbf{Description.} The procedure generates two groups of three relevant variables respectively highly, moderately and weakly discriminant and a group of irrelevant variables. The relevant variables are drawn conditionally to a realisation of the outcome $Y$. More precisely, the first three relevant variables are generated from a Gaussian distribution $\mathcal{N}_{1}(Yj, 1)$ with probability 0.7 and from $\mathcal{N}_{1}(0,1)$ with probability 0.3, $j \in \{1, 2, 3\}$. The variables 4 to 6 are simulated from a distribution $\mathcal{N}_{1}(0,1)$ with probability 0.7 and from $\mathcal{N}_{1}(Y(j-3),1)$ with probability 0.3, $j \in \{4,5,6\}$ and are thus less predictive. The irrelevant variables are generated independently from the distribution $\mathcal{N}_{1}(0,20)$. Thus, conditionally to $Y$, the  $X_j$'s are drawn according to the following Gaussian mixtures densities:
$$p_{X_j}(x) = 0.7 \varphi(x; Yj, 1) + 0.3 \varphi(x; 0, 1), \; j \in \ensemble{1,2,3}, $$

$$p_{X_j}(x) = 0.7 \varphi(x; 0, 1) + 0.3 \varphi(x; Y(j-3), 1), \; j \in \ensemble{4,5,6},$$
and
$$p_{X_j}(x) = \varphi(x; 0, 20), \; j \in \ensemble{7, \ldots, p},$$
where $\varphi(\cdot; \mu, \sigma)$ is the normal density function with mean $\mu$ and standard error $\sigma$.
We generate $n=100$ samples and $p=200$ variables.

\noindent \textbf{Results.}  As shown by the boxplots of Figure~\ref{fig:weston_boxplots}, the permutation importance measures of the six relevant variables are much higher than for the other variables. The typical ranking proposed by these importances is the following : $X_3$, $X_2$, $X_6$, $X_1$, $X_5$, $X_4$  and then the other variables (see Fig. 4). This ranking corresponds to the simulation design and the behavior of the algorithms RFE and NRFE are quite similar in this context (see Figure~\ref{fig:weston}).  However, one can observe a slight difference between the two algorithms for the smallest models.  For instance, with two variables, the averaged validation error is  0.23 for NRFE but 0.042 for RFE. The presence of small correlations between the six first variables impacts the importance measures and leads NRFE to select, for instance, the variable $X_1$ before the variable $X_6$. Algorithm RFE is less subject to this phenomenon since it recomputes the permutation importance measures at each step.

\begin{figure}
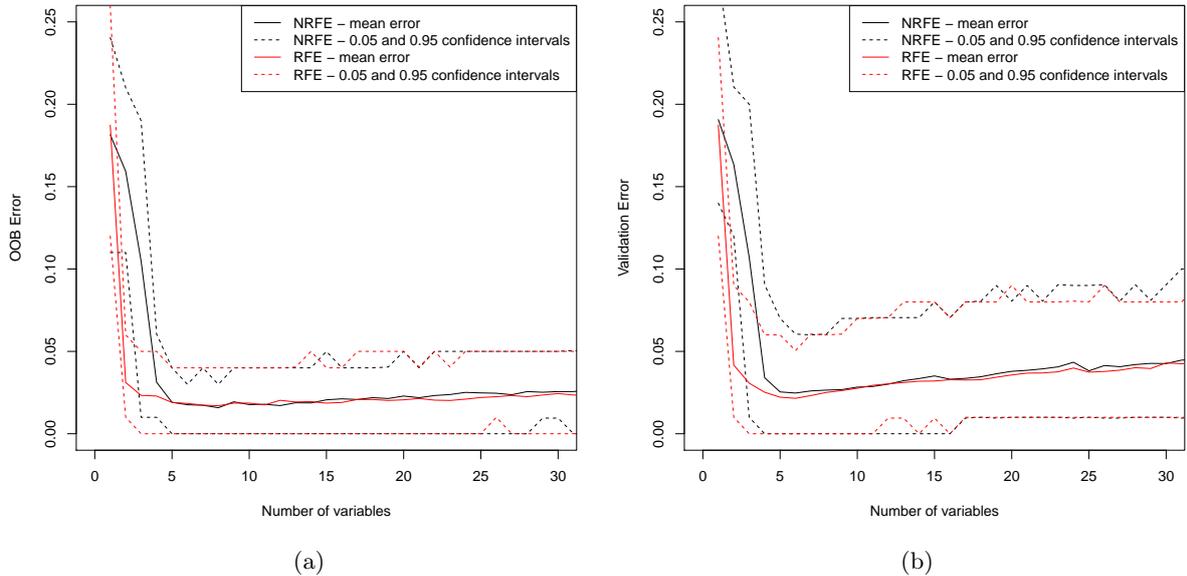

	\begin{center}
	\subfloat[]{\includegraphics[scale=0.45]{westonGraphs_oob.pdf}}
	\subfloat[]{\includegraphics[scale= 0.45]{westonGraphs_valid.pdf}}
	\caption{ Exp. 1 - Out-of-bag error estimate (left) and validation set estimate (right) versus the number of variables for RFE and NRFE algorithms. The curves are averaged over 100 runs of variable selections.}
	\label{fig:weston}
	\end{center}
\end{figure}

\subsubsection*{Experiment 2: three large blocks of highly correlated variables}
We now compare RFE and NRFE on an experiment inspired by \citet{rf:TL11} and motivated by a genomic application. The aim of this experiment is to study the effect of correlation on the importance measure when large groups of true predictors are observed. Three groups of highly correlated predictors are involved in a classification setting. The link between the outcome and the predictors is quite simple and can be seen as slight perturbation of a linear discriminant protocol.

\begin{figure}
	\begin{center}
	\includegraphics[scale=\echelle]{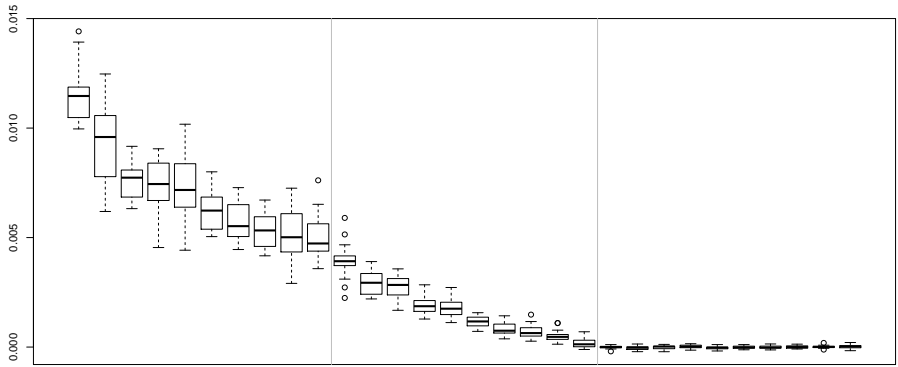}
	\caption{ Exp. 2 - Boxplots of the initial permutation importance measures. In each group, only the ten variables with the highest importances are displayed.}
	\label{fig:tolosi_boxplots}
	\end{center}
\end{figure}

\noindent\textbf{Description.}
Three groups which respectively contain $p_1$, $p_2$ and $p_3$  correlated variables are generated with a decreasing discriminative power: the variables are highly relevant in first group, they are weakly relevant in the second and irrelevant in the last group. Let $\mathbf U = (U_1, \dots,U_n)^t$ be a vector of i.i.d. variables drawn according to the mixture density $\frac{1}{2} \varphi(\cdot \, ; 0, 0.2) + \frac{1}{2}\varphi(\cdot \, ; 1, 0.3)$. For $j \in \{1, \dots ,p_1 \}$, let $\mathbf U^j$ a random vector defined by adding a Gaussian noise $\mathcal{N}_{1}(0,0.5)$ to 20~\% of the elements of $\mathbf U$, the perturbated coordinates being chosen at random. Note that by generating the $\mathbf U^j$'s this way, those are highly correlated. Independently, some vectors $\mathbf V,\mathbf V^1, \dots, \mathbf V^{p_2}$ and $\mathbf R$, $\mathbf R^1, \dots, \mathbf R^{p_3} $ are drawn in the same way. Finally, the  outcomes $Y_i$'s are defined for $i \in \{1, \dots,n \}$ by \begin{equation}
Y_i = \left\{
  \begin{array}{rl}
    1 & \mbox{ if } 5U_i + 4V_i - (\overline{5\mathbf U + 4\mathbf V}) + \varepsilon_i > 0 \\
    0 & \mbox{ otherwise} \\
  \end{array}
\right. ,
\end{equation}
where $\overline{5\mathbf U + 4\mathbf V}$ denotes the mean of the vector $5\mathbf U + 4\mathbf V$ and the $\varepsilon_i$'s are i.i.d. random variables drawn according to $\mathcal{N}_{1}(0,0.1)$. The problem considered here consists in predicting $Y$ using as predictors all the $\mathbf U^j$, the $\mathbf V^j$ and the $\mathbf R^j$, but not $\mathbf U$, $\mathbf V$ and $\mathbf R$. Since only $\mathbf U$ and $\mathbf V$ are involved in the definition of $Y$, it follows that only the $\mathbf U^j$ and the $\mathbf V^j$ are relevant. Moreover, the $\mathbf U^j$  are more relevant than the $\mathbf V^j$. We set the number of observations to $n=100$, the number of variables to $p=250$ with $p_1= p_2 = 100$ and $p_3 = 50$.

\noindent\textbf{Results.}
Figure \ref{fig:tolosi} shows a meaningful difference between RFE and NRFE regarding the number of involved variables. The test errors for NRFE  and  RFE  are similar (see Table~\ref{table:err}) but, as shown by the validation error curves and the out-of-bag errors curves on Figure~\ref{fig:tolosi}, NRFE  selects around 60 variables whereas RFE procedure selects less than 20 variables. By recomputing the importances at each step of the algorithm,  RFE  provides a screening of variables which is much more efficient than the one obtained by NRFE.
\begin{figure}
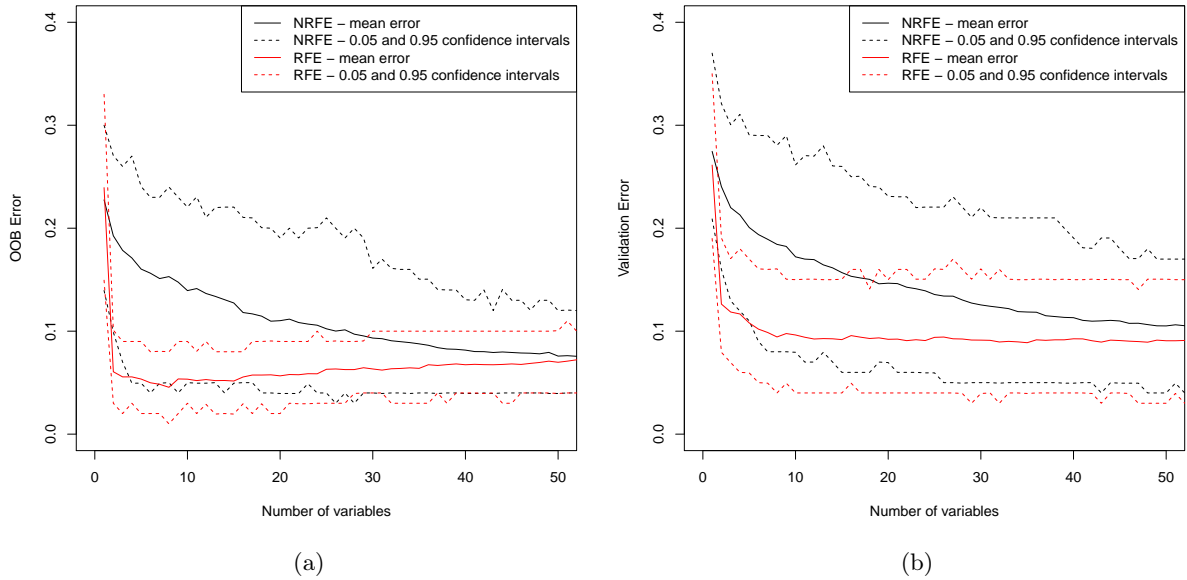

	\begin{center}
	\subfloat[]{\includegraphics[scale=0.45]{tolosiGraphs_oob.pdf}}
	\subfloat[]{\includegraphics[scale=0.45]{tolosiGraphs_valid.pdf}}
	\caption{ Exp. 2 - Out-of-bag error estimate  (left) and validation set estimate  (right) versus the number of variables for RFE and NRFE algorithms. The curves are averaged over 100 runs of variable selections.}
	\label{fig:tolosi}
	\end{center}
\end{figure}

\subsubsection*{Experiment 3: four blocks of highly correlated predictors and one block of independent predictors}

\begin{figure}
	\begin{center}
	\includegraphics[scale=\echelle]{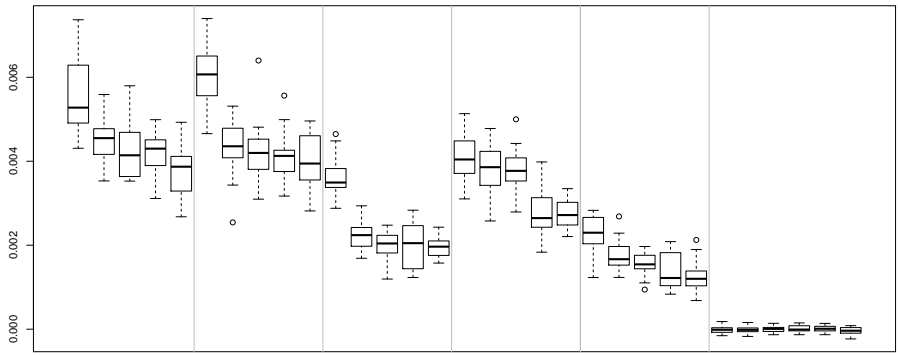}
	\caption{ Exp. 3 - Boxplots of the initial permutation importance measures. In each group, only the five variables with the highest importances are displayed.}
	\label{fig:mixture_boxplots}
	\end{center}
\end{figure}

This experiment is an original classification problem based on Gaussian mixture distributions with a large number of predictors. The main difference with the experiment 2 is that four groups of highly correlated predictors and a group of independent predictors are now involved. The aim is to illustrate the situation where some independent variables are less predictive than other variables but have higher importances due to the correlation effect. These independent variables can then be selected by NRFE whereas more predictive variables are not kept in the selection.

\noindent\textbf{Description.} The outcome $Y$ is sampled from  a Bernoulli distribution of parameter $1/2$. We generate $n_b$ groups $\Bcal_1, \ldots, \Bcal_{n_b}$ of correlated variables and an additional group $\Bcal_{ind}$ of independent variables. Within a group $\Bcal_\ell$, the vector of variables is simulated from  a Gaussian mixture.  More precisely,  conditionally to $Y$  the vector of variables in  $\Bcal_\ell$ has the distribution of a multivariate Gaussian distribution $\mathcal{N}_q(Y \mu_\ell \indd ,C)$ with $C = (1-c)Id + c \indd \indd^t$ where $c$ is the correlation between two different variables of $\Bcal_\ell$. Moreover, conditionally to $Y$ a random variables $X_j$ of $\Bcal_{ind}$ has a Gaussian distribution $\mathcal{N}(Y \mu_j  ,1)$. The parameters $\mu_\ell$ and $\mu_j$ correspond to the discriminative power of the variables. We choose these mean parameters decreasing linearly from 1 to 0.5. In this way, the groups $\Bcal_1, \ldots, \Bcal_{n_b}$ have a decrease discriminative power which are higher than the independent group $\Bcal_{ind}$.  For the experiment, we simulate $n=250$ samples and $p=500$ variables: 4 blocks of $q=15$ variables highly correlated with $c=0.9$, one block $\Bcal_{ind}$ of 10 independent variables and 430 irrelevant variables.

\noindent\textbf{Results.}
\begin{figure}
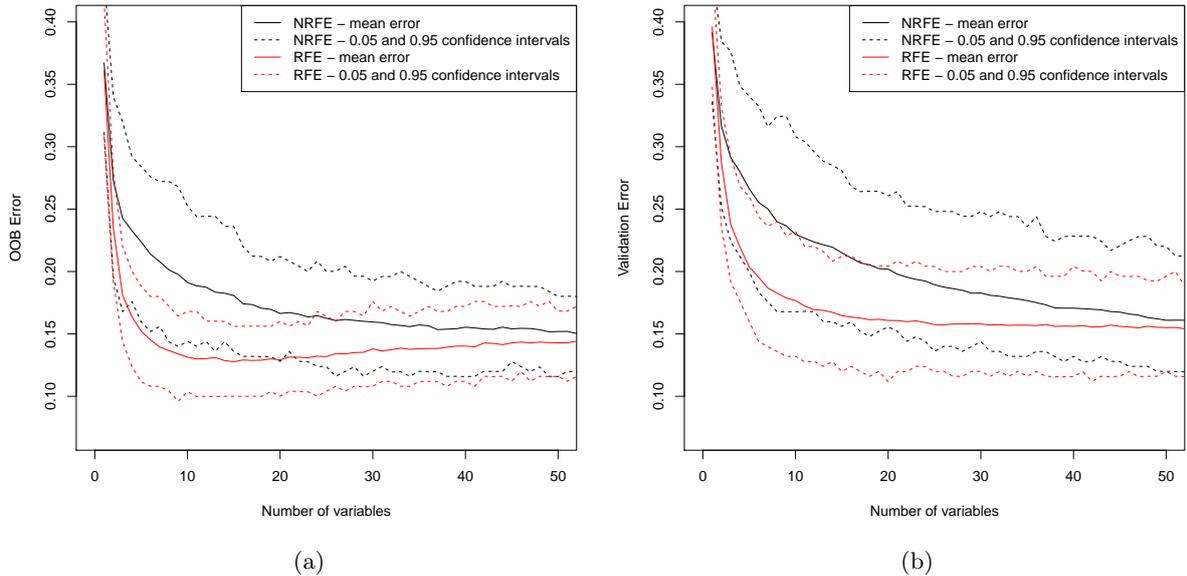

	\begin{center}
	\subfloat[]{\includegraphics[scale=0.45]{mixtureGraphs_oob.pdf}}
	\subfloat[]{\includegraphics[scale=0.45]{mixtureGraphs_valid.pdf}}
	\caption{ Exp. 3 - Out-of-bag error estimate  (left) and validation set estimate  (right) versus the number of variables for RFE and NRFE algorithms. The curves are averaged over 100 runs of variable selections.}
	\label{fig:mixture}
	\end{center}
\end{figure}
As for experiment 2, RFE provides an efficient screening of the variables: the validation error and the out-of-bags error decrease much faster for RFE than for NRFE on Figure~\ref{fig:mixture}. The validation error and the out-of-bag error for the models proposed by RFE with 20 variables are comparable to the test error of the minimum error model selected by RFE (see Table~\ref{table:err}).  This is not the case for NRFE. Again, this difference is explained by the impact of the high correlations between predictors on the ranking of the importances. For instance, one can observed that an importance in group 5 (group of independent variables) is higher than the importances of group 3 (see Fig. \ref{fig:mixture_boxplots}) whereas the predictive abilities of the independent group are simulated to be lower. The NRFE procedure tends to select variables from the independent group before selecting variables from group 3.

\subsubsection*{Experiment 4: many large blocks of correlated variables}

This classification problem inspired by \citet{rf:AK08} is related to gene expressions. The aim is to compare the algorithms in a difficult situation involving many groups composed of many correlated predictors. The correlation between predictors  is the same in a group but it  increases from one group to another. Moreover the link between the predictors and the outcome is more complex than in the previous experiments since only one variable of each group is used to define the outcome.

\begin{figure}
	\begin{center}
	\includegraphics[scale=\echelle]{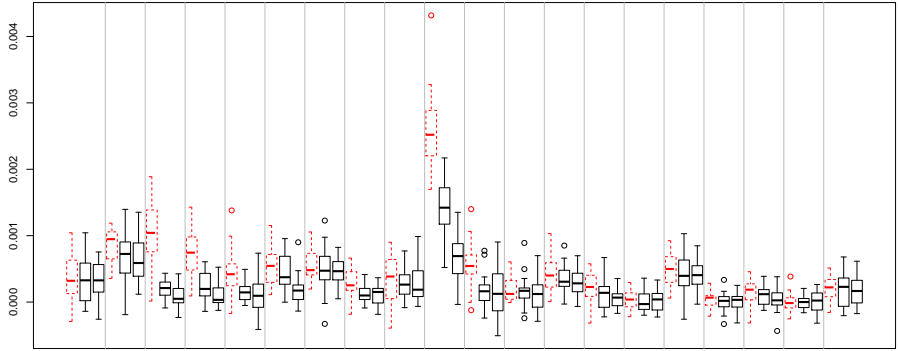}
	\caption{ Exp. 4 - Boxplots of the initial permutation importance measures. For each group, only the predictive variable (dashed lines) and the two variables with the highest importance in the same group (solid lines) are displayed.}
	\label{fig:archer_boxplots}
	\end{center}
\end{figure}

\noindent\textbf{Description.} We simulate $n=100$ independent vectors $\mathbf X_1, \ldots, \mathbf X_n$ as follows: the mean vectors $\mu_i$'s are chosen uniformly distributed in $[6; 12] ^p $ and the $\mathbf X_i$'s are drawn from the Gaussian distribution $\mathcal N _p(\mu_i,C)$.  Each observation is composed of $L=20$ independent blocks of $K=40$ covariates ($p=800$). The corresponding covariance matrix of the observed complete vector has the block-diagonal form
$$C = \begin{pmatrix}
		  	C_1 	& 0 		& \cdots 	& 0 \\
			0 		& C_2 		& \cdots	& 0 \\
			\vdots 	& \vdots	& \ddots	& \vdots \\
			0 		& 0 		& \cdots	& C_L
\end{pmatrix}.$$
The covariance matrix $C_\ell$ of the $\ell$-th group is
$$C_\ell = \begin{pmatrix}
		  	1 		& \rho_\ell 		& \cdots 	& \rho_\ell \\
			\rho_\ell 	& 1 		& \cdots	& \rho_\ell \\
			\vdots 	& \vdots	& \ddots	& \vdots \\
			\rho_\ell 	& \rho_\ell 		& \cdots	& 1
\end{pmatrix},$$
where the correlations are set to $\rho_\ell = 0.05\ell - 0.05$. The correlation of each block is taken from 0 to 0.95 by increments of 0.05. We then compute the probability that the observation $\mathbf X_i$ is from class 1 by
$$\pi_i = \dfrac{e^{\mathbf X_i^t \beta}}{1+e^{\mathbf X_i^t \beta}},$$
for $i \in \{1, \dots,n \}$. The vector $\beta$ of regression coefficients is sparse: $\beta_j = 0.5$ if $j = (\ell-1)K+1$ for some $\ell \in \{1, \dots,L\}$ and 0 otherwise. In other words, the posterior probability $\pi_i$ is generated using only informations from the first variables of each group. Finally, for $i \in \{1, \dots,n \}$ the response $Y_i$ is generated from
$$Y_i = \left\{
  \begin{array}{rl}
    1 & \mbox{ if } \pi_i < U_i, \: \textrm{ where } U_i \textrm{ is an uniform distribution on } [0,1] \\
    0 & \mbox{ otherwise}. \\
  \end{array}
\right.
$$

\noindent\textbf{Results.}
\begin{figure}
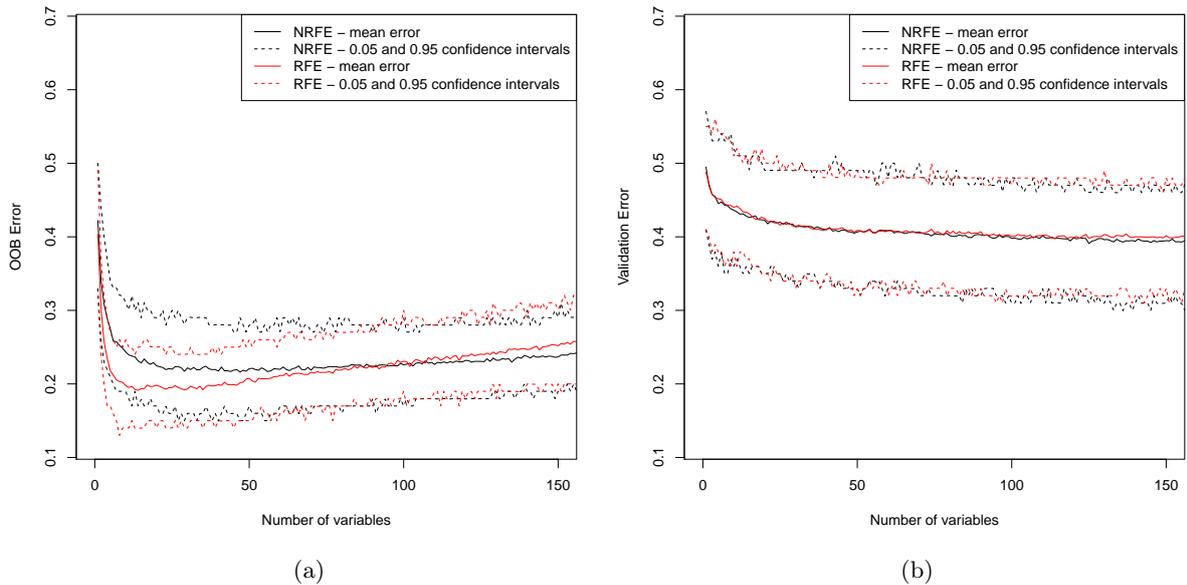

	\begin{center}
	\subfloat[]{\includegraphics[scale=0.45]{archerGraphs_classifSparse_oob.pdf}}
	\subfloat[]{\includegraphics[scale=0.45]{archerGraphs_classifSparse_valid.pdf}}
	\caption{ Exp. 4 - Out-of-bag error estimate  (left) and validation set estimate  (right) versus the number of variables for RFE and NRFE algorithms. The curves are averaged over 100 runs of variable selections.}
	\label{fig:archer}
	\end{center}
\end{figure}
This is a tricky problem in a high dimensional setting (100 observations of 800 variables). Without surprise, it is a difficult task to find the relevant variables (in dashed line) from the empirical permutation importances (see Fig.~\ref{fig:archer_boxplots}). In particular, we cannot clearly discriminate the relevant variables of each group as they do not necessary show higher importances than the other variables in the same block. In this complex simulation design, the screening of variables provided by RFE and NRFE are similar according to the validation error whereas RFE seems more efficient according to OOB error (see Fig.~\ref{fig:archer}).

\subsubsection*{Experiment 5. Under the assumptions of Proposition \ref{prop:Gauss} in high dimension.}
\begin{figure}
	\begin{center}
	\includegraphics[scale= 0.5 ]{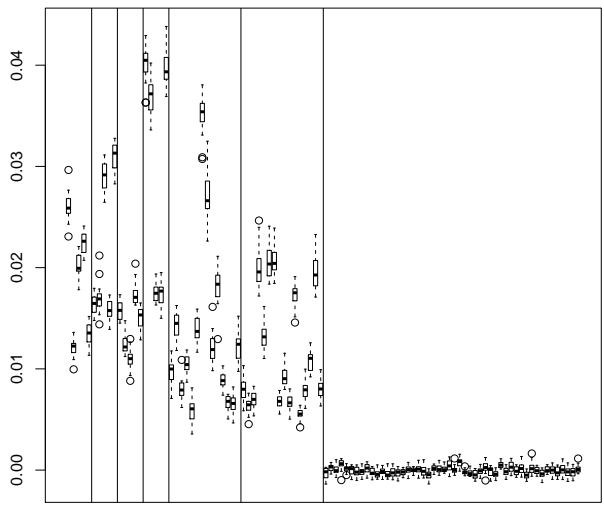}
	\caption{ Exp. 5 - Boxplots of the initial permutation importance measures. Four groups of 5 correlated variables and two groups of 15 correlated variables are simulated.}
	\label{fig:reg_boxplots}
	\end{center}
\end{figure}

We now propose a complex linear regression design involving Gaussian variables which satisfy the assumptions of Proposition~\ref{prop:Gauss}. Many groups of correlated variables are considered. The sizes of the groups are different but all the variables have the same predictive power. The correlation between the variables in the groups is the same for all the groups. The goal of this experiment is to observe the effect of a large number of correlated predictors on the performance of the two algorithms.

\noindent\textbf{Description.}
We simulate $n=1000$ i.i.d. copies of the random vector $(\mathbf X, Y)$ from the multivariate Gaussian distribution $\mathcal N _{p+1}\bigg(0, \begin{pmatrix}
	C & \boldsymbol \tau\\
	\boldsymbol \tau^t & 1
\end{pmatrix} \bigg)$
where the vector $\boldsymbol \tau = (\tau_0, \ldots, \tau_0, 0, \ldots, 0)^t$ contains the covariances between each predictor $X_j$ and $Y$, $\tau_j=\tau_0$ for the relevant variables and $\tau_j=0$ for the irrelevant ones.
We consider $L$ groups of correlated variables and some additional irrelevant and independent variables.
The matrix $C$ has a block-diagonal form
$$C = \begin{pmatrix}
		  	C_1 	& 0 		& \cdots 	& 0 		& 0\\
			0 		& C_2 		& \cdots	& 0 		& 0\\
			\vdots 	& \vdots	& \ddots	& \vdots 	& \vdots\\
			0 		& 0 		& \cdots	& C_L		& 0\\
			0 		& 0 		& \cdots	& 0			& Id\\
\end{pmatrix},$$
where
$$C_\ell = \begin{pmatrix}
		  	1 		& \rho 		& \cdots 	& \rho \\
			\rho 	& 1 		& \cdots	& \rho \\
			\vdots 	& \vdots	& \ddots	& \vdots \\
			\rho 	& \rho 		& \cdots	& 1
\end{pmatrix}.$$
The blocks are simulated with different size in order to highlight the effect of the number of correlated variables. We take 4 groups of 5 variables, 2 groups of 15 variables and 50 irrelevant variables. The correlation $\rho$ is set to 0.9 and $\tau_0$ is equal to 0.3 for the relevant variables and is equal to 0 for the irrelevant ones.

\noindent\textbf{Results.}
For the same reasons as for the previous experiments (see Fig.~\ref{fig:reg_boxplots}),  NRFE mainly selects the variables from the four small groups at first  (5 correlated variables) and it does not select enough variables from the two large groups (15 correlated variables).  On the contrary, RFE selects earlier the two large groups. This problem illustrates the difficulties of NRFE to select the most predictive variables in a regression  setting.
\begin{figure}
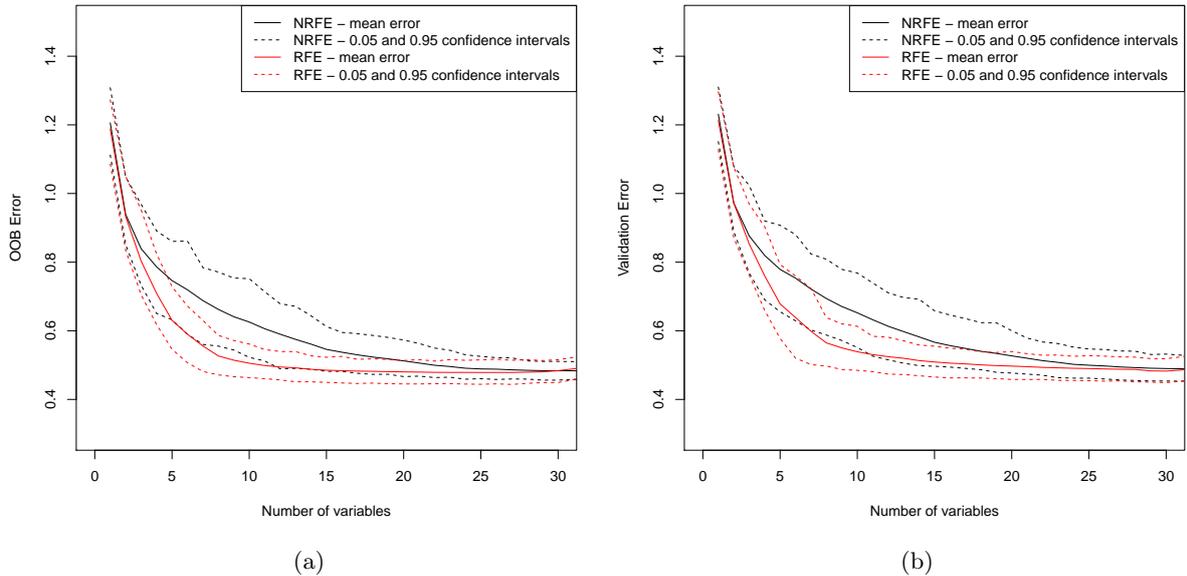

	\begin{center}
	\subfloat[]{\includegraphics[scale= 0.45 ]{Nous2Graphs_reg_oob.pdf}}
	\subfloat[]{\includegraphics[scale=0.45]{Nous2Graphs_reg_valid.pdf}}
	\caption{ Exp. 5 - Out-of-bag MSE error (left) and validation set estimate (right) versus the number of variables for RFE and NRFE algorithms. The curves are averaged over 100 runs of variable selections.}
	\label{fig:reg}
	\end{center}
\end{figure}

\subsubsection*{Prediction errors of the five experiments}

Table~\ref{table:err} summarizes the performances of the two variable selection approaches on a sample test. The errors are given for a model which minimizes the  error among the selection path induced by the backward search (Minimum error model). We also consider the errors of the small models defined  with only  five predictors. We give the results of  two possible methods for estimating the errors : the train+validation+set method and the out-of-bag+test method. The errors are averaged over 100 runs. 

RFE and NRFE show similar performances  for the minimum error model. However, when considering a model with five variables, the minimum error obtained with RFE  is significantly smaller than the NRFE in experiments  2, 3  and 5. Indeed, as noticed before, RFE is particularly useful to reduce the effect of correlation in the last steps of the variable selection procedure and tends to keep the variables which are the most able to predict the outcome.
 
 These simulations also confirm the fact that the correlation increases the instability of the empirical permutation importance, as explained in \citet{rf:TL11} and \citet{rf:GPT10}. 
\begin{table}
\begin{center}
\begin{tabular}{cl|cc|ccc}
	& & \multicolumn{2}{c|}{Minimum error model (std)} & \multicolumn{2}{|c}{Model with five variables (std)}\\
	& & RFE & NRFE & RFE & NRFE\\
	\hline
	\hline
  	\multirow{5}{*}{oob-test} & Exp. 1 & 0.0316 (0.026) & 0.028 (0.020) & 0.0309 (0.026) & 0.0237 (0.018)\\
  					& Exp. 2 &  0.111 (0.038) & 0.104 (0.040) &  0.118 (0.038) & 0.188 (0.052)\\
  					& Exp. 3 & 0.166 (0.026) & 0.165 (0.032) &  0.206 (0.032) & 0.270 (0.047)\\	
  					& Exp. 4 & 0.437 (0.052) & 0.414 (0.052) & 0.463 (0.054) & 0.463 (0.046)\\
  					& Exp. 5 & 0.493 (0.024) & 0.490 (0.025) & 0.681 (0.069) & 0.778 (0.078)\\					
  	\hline
	\multirow{5}{*}{train-valid-test} 	& Exp. 1 & 0.0285 (0.022) & 0.025 (0.021) & 0.0284 (0.022) & 0.0245 (0.019)\\
  					& Exp. 2 & 0.0977 (0.038) &0.0981 (0.035) & 0.115 (0.038) & 0.187 (0.053)\\
  					& Exp. 3 & 0.154 (0.026) & 0.159 (0.028) & 0.207 (0.030) & 0.270 (0.047)\\	
  					& Exp. 4 & 0.418 (0.055) & 0.410 (0.057) & 0.462 (0.052) & 0.461 (0.045)\\
					& Exp. 5 & 0.489 (0.023) & 0.489 (0.025) & 0.681 (0.069) & 0.778 (0.078)\\
  	\hline
\end{tabular}
\end{center}
\caption{Averaged test errors (with standard deviations) over 100 runs for the minimum error models and for the models with only five variables.}
\label{table:err}
\end{table}

\section{Application to the Landsat Satellite dataset}

In this section,  the RFE and the NRFE algorithms  are compared on the Landsat Satellite dataset. The data consists of the multi-spectral values of pixels in a sub-area of a satellite image. It is available at the UCI Machine Learning Repository (see \url{http://www.ics.uci.edu/~mlearn/}). 

The training set contains $n=4435$ samples. Each observation is a vector of size $p=36$, composed by the pixel values in four spectral band (two in the visible region and two in the near infra-red) of each of the 9 pixels in the $3\times3$ neighbourhood.
The first four variables are the spectral values for the top-left pixel, the four next are the spectral values for the top-middle pixel and then those for the top-right pixel, and so on with the pixels read out in sequence left-to-right and top-to-bottom.
The aim is to predict the classification label associated with the central pixel in each neighbourhood, given the multi-spectral values. The labels are coded as follows: (1) red soil (2) cotton crop (3) grey soil (4) damp grey soil (5) soil with vegetation stubble (6) mixture class (all types present) (7) very damp grey soil.

Note that the four spectral values for the central pixels, which is the most correlated to the class labels, are given by variables 17, 18, 19 and 20. The variables corresponding to the same spectral band are 
particularly correlated.  Consequently, the variables in this  dataset can be grouped into four blocks of nine correlated variables $G_{17}$, $G_{18}$, $G_{19}$ and $G_{20}$, each block corresponding to one of the variables 17, 18, 19 and 20.

We run the two  algorithms on the data 100 times. The averaged OOB error and validation errors are given on Figure~\ref{fig:landsat}. For each run, the validation error has been computed  by randomly selecting  a validation set containing one third of the observations.

On this dataset, the screening of variables proposed  by RFE is more efficient than the screening proposed by NRFE.  The OOB error (see Fig.~\ref{fig:landsatdataOOB}) and the validation error (see Fig.~\ref{fig:landsatdataTEST}) are both lower for RFE  than for NRFE. The superiority of  RFE is particularly manifest for small size models: with five variables, the OOB error rate is 0.13 for RFE (with standard deviation 0.001) while it is 0.48 for NRFE (with standard deviation 0.003). The validation error is 0.13 (with standard deviation 0.008) for RFE while it is 0.48 (with standard deviation 0.01) for NRFE.

\begin{figure}
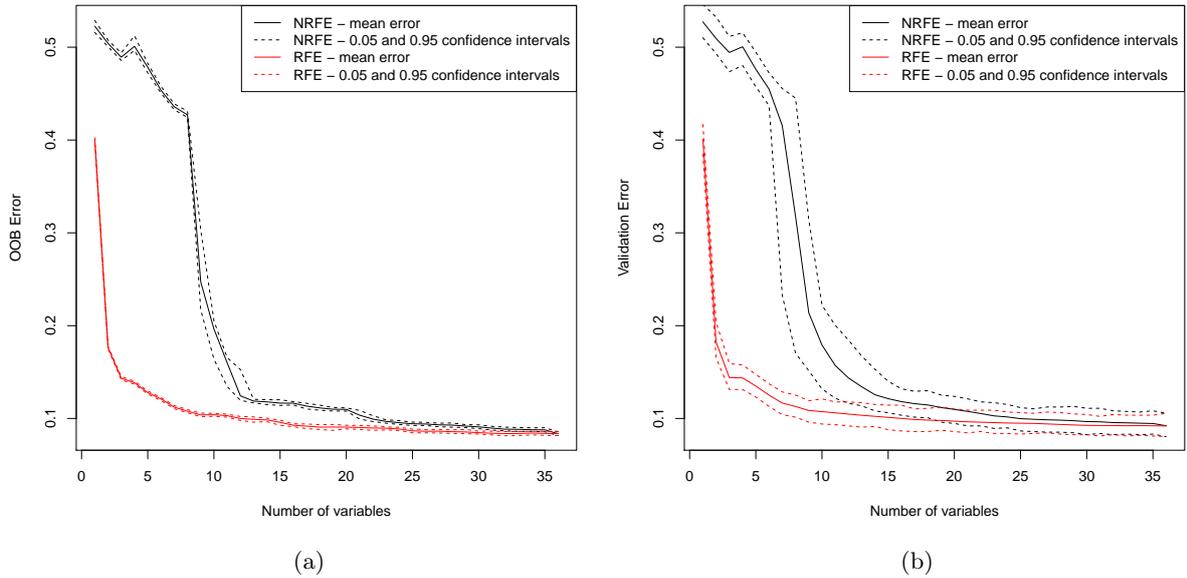

	\begin{center}
	\subfloat[]{\includegraphics[scale=0.45, page=1]{satelliteErrors.pdf}\label{fig:landsatdataOOB}}
	\subfloat[]{\includegraphics[scale=0.45, page=2]{satelliteErrors.pdf}\label{fig:landsatdataTEST}}
\caption{ Landstat data - Out-of-bag error estimate (left) and validation set estimate  (right) versus the number of variables for RFE and NRFE algorithms. The curves are averaged over 100 runs of variable selections.}
	\label{fig:landsat}
	\end{center}
\end{figure}

The proportion of times each variable has been selected in the set of the ten first selected variables  is given on  Figure~\ref{fig:barplotLandsat}. For both algorithms, the variables 17, 18 and 20 are always selected. The variable 19 is  selected 54 times over 100 by RFE whereas it is never selected by NRFE, which is natural according to the ranking of the importance measures  (see Fig.~\ref{fig:boxplotLandsat}). The situation is actually even worst: Figure~\ref{fig:Selec10Landsat} shows that, over the 100 runs of the NRFE algorithm, none of the variables in $G_{19}$ can be found in the 10 first selected variables. On the contrary, for RFE algorithm, the variables in $G_{19}$ are more selected in the ten first variables (blue in the Figure). In addition, the first selected variable is the variable 18 at each time (green in the Figure). The second and the third selected variables are more frequently the variables 17 and 20 (respectively red and black in the Figure). As a consequence, the rankings of the selected variables by RFE are more stable than NRFE which do not always select the ``true'' variables in the first positions. 
The selection results of our experiment are consistent with those of \citet{maugis11}.

\begin{figure}
	\begin{center}
	\subfloat[RFE]{\includegraphics[scale=0.3, page=1]{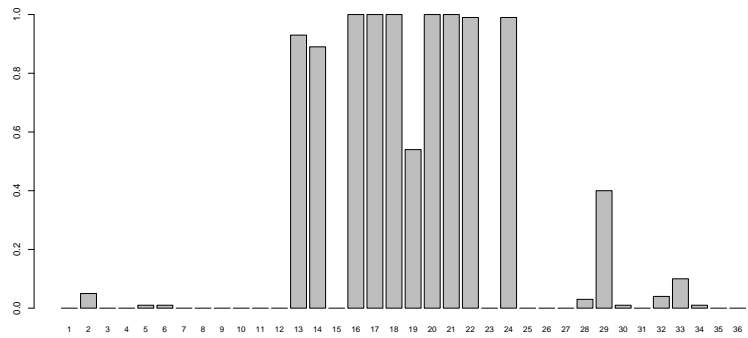}\label{fig:landsatdataFreq1}}
	\hfill
	\subfloat[NRFE]{\includegraphics[scale=0.3, page=1]{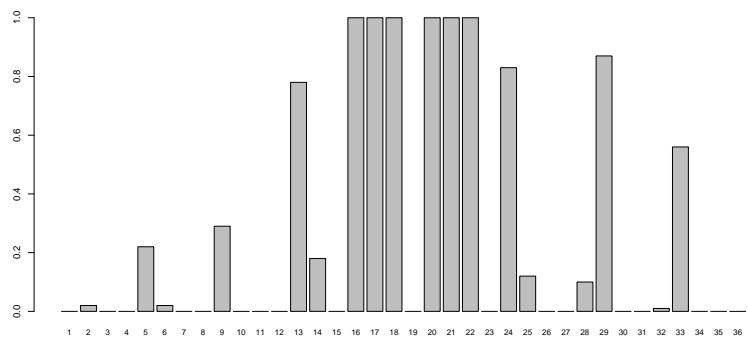}\label{fig:landsatdataFreq2}}
	\caption{Landstat data - Proportion of time each variable is selected in the ten first variables.}
	\label{fig:barplotLandsat}
	\end{center}
\end{figure}

\begin{figure}
	\begin{center}
	\subfloat[The 18 more important variables]{\includegraphics[scale=0.3, page=1]{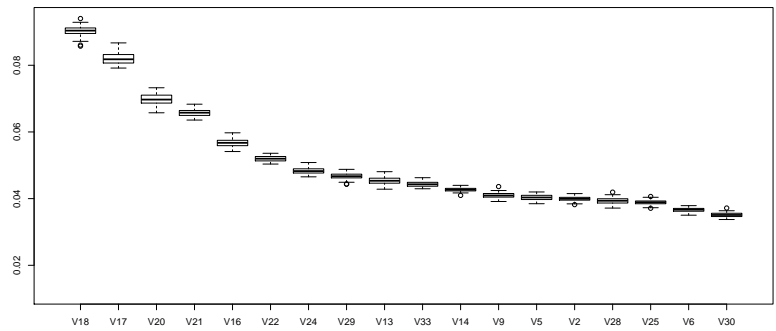}\label{fig:landsatdataBoxplot1}}
	\hfill
	\subfloat[The 18 less important variables]{\includegraphics[scale=0.3, page=2]{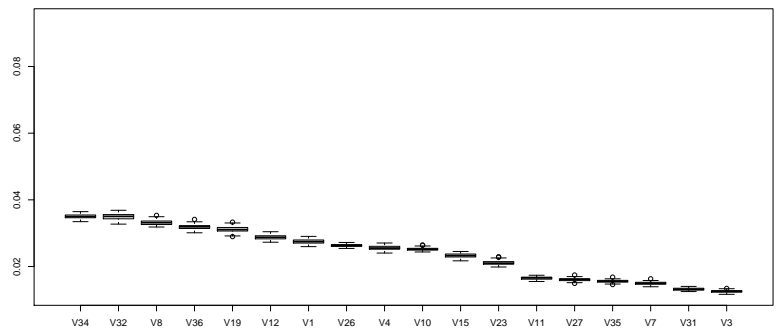}\label{fig:landsatdataBoxplot2}}
	\caption{Landstat data - Boxplots of the initial permutation importance measures.}
	\label{fig:boxplotLandsat}
	\end{center}
\end{figure}

\begin{figure}
	\begin{center}
	 \includegraphics[scale = 0.7]{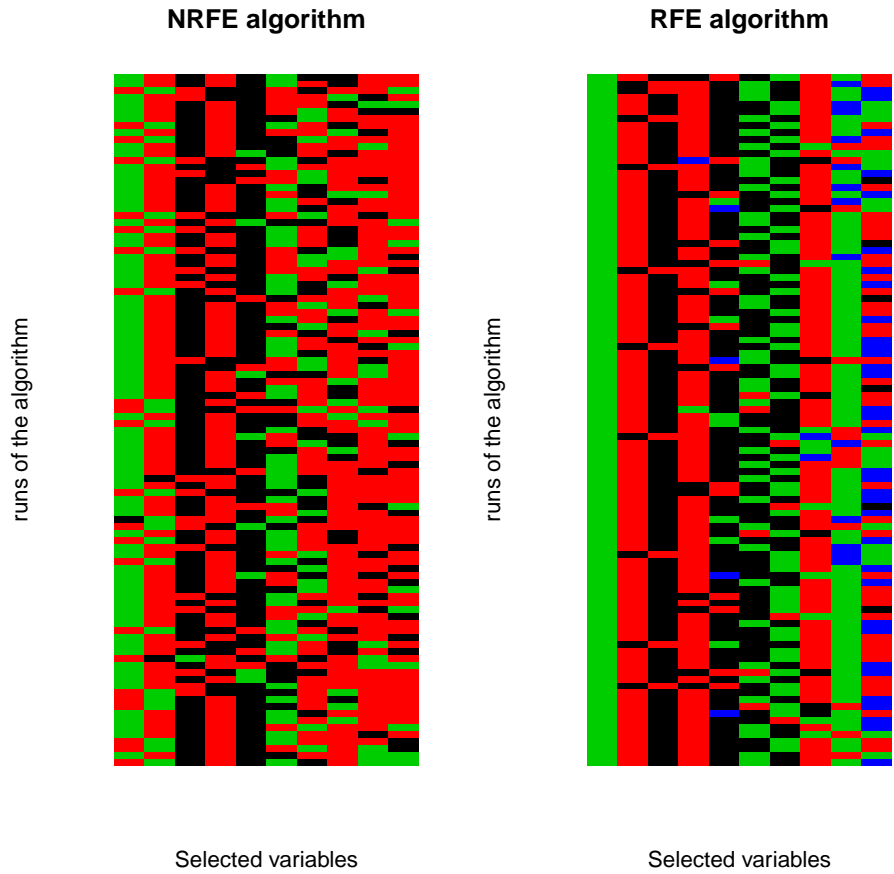}
	\caption{Landstat data - These two diagrams  represent the groups of the ten first variables selected by each algorithm. Each line corresponds to one of the 100 runs of the algorithms. The colors represent the group of the selected variable: red for $G_{17}$, green for $G_{18}$, blue for  $G_{19}$ and black for $G_{20}$.}
	\label{fig:Selec10Landsat}
	\end{center}
\end{figure}

\section{Conclusion}

In this paper, we studied the problem of variable selection using the permutation importance measure from the random forests. Several simulation studies in the literature have shown an effect of the correlations between predictors on this criterion.

We first provided some theoretical insights on the effect of the correlations on the permutation importance measure. Considering an additive regression model, we obtained a very simple expression of this criterion which depends on the correlation between the covariates and on the number of correlated variables. Extending our results to a more general context is a challenging problem, this question should be investigated deeply for improving our knowledge of this widely used criterion. Moreover, the impact of the correlations on other importance measures (see \citet{varImp:vdL06}, \citet{rf:I07}) is a general question of great interest. 

In a second step we focused on variable selection algorithm based on random forests analysis. A recursive and a non recursive approaches have been evaluated through an extensive simulation study on several classification and regression designs. As expected, in presence of correlated predictors, 
the screening given by RFE algorithm  is more efficient than the one proposed by NRFE: the prediction errors is always smaller with recursive strategy when small size models are considered. As a matter of fact RFE reduces the effect of the correlation on the importance measure.

In the RFE algorithm, updating the ranking is especially crucial at the last steps of the algorithm, when most of the irrelevant variables have been eliminated. In future works, the algorithm could be adapted by combining a non recursive strategy at the first steps and a recursive strategy at the end of the algorithm.

\appendix

\section{Proofs}

\subsection{Proof of Proposition \ref{prop:ResAdditif}}
\label{sec:proofResAdditif}

The random variable $X_j'$ and the vector  $\mathbf X_{(j)}$ are defined as in Section \ref{sec:background}:
\begin{align*}
I(X_j) 	&= \E[(Y - f(\mathbf X) + f(\mathbf X) - f(\mathbf X_{(j)}))^2] - \E[(Y-f(\mathbf X))^2]\\
		&= \E[(f(\mathbf X) - f(\mathbf X_{(j)}))^2] + 2 \E \left[ \varepsilon (f(\mathbf X) - f(\mathbf X_{(j)})) \right] \\
		&= \E[(f(\mathbf X) - f(\mathbf X_{(j)}))^2],
\end{align*}
since $ \E [ \varepsilon  f(\mathbf X) ] = \E [   f(\mathbf X) \E[ \varepsilon | \mathbf X ] ]  = 0 $ and $ \E[ \varepsilon  f(\mathbf X_{(j}) ] = \E( \varepsilon)   \E [  f(\mathbf X_{(j}) ] = 0  $. Since the model is additive, we have:
\begin{align*}
I(X_j) 	&= \E[(f_j(X_j) - f_j(X_j'))^2]\\
		&= 2\V[f_j(X_j)],
\end{align*}
as $X_j$ and  $X_j'$ are independent and identically distributed. For the second statement of the proposition, using the fact that $f_j(X_j)$ is centered we have:
\begin{align*}
\C[Y, f_j(X_j)] &= \E \left[  f_j(X_j) \E[Y| \mathbf X] \right] = \E [f_j(X_j) \sum_{k=1}^p f_k(X_k)]\\
				&= \V[f_j(X_j)] + \sum_{k\neq j} \E\left[ f_j(X_j) f_k(X_k) \right] \\		
				&= \frac{I (X_j)} 2  + \sum_{k\neq j} \C\left[ f_j(X_j), f_k(X_k) \right].
\end{align*}

\subsection{Proof of Proposition \ref{prop:Gauss}}
\label{sec:proofGauss}
This proposition is an application of Proposition~\ref{prop:ResAdditif} for a particular distribution. We only show that $ \alpha = C^{-1} \boldsymbol\tau$ in that case.

Since $(\mathbf X, Y)$ is a normal multivariate vector, the conditional distribution of $Y$ over $\mathbf X$ is also normal and the conditional mean $f(\mathbf x) = \E[Y|\mathbf X=\mathbf x]$ is a linear function: $f(\mathbf x) = \sum_{j=1}^p \alpha_j x_j$ (see for instance \citet{rao73}, p. 522). Then, for any $j \in \{1, \dots, p \}$,
	\begin{align*}
	\tau_j & =	\E[X_jY]  \\
 		& = \E[\; X_j \E[Y | \mathbf X] \;] \\
			& = \alpha_1 \E [X_1X_j] + \cdots + \alpha_j  \E [X_j^2] + \cdots + \alpha_p \E [X_p X_j] \\
			& = \alpha_1 c_{1j} + \cdots + \alpha_j c_{jj}+ \cdots + \alpha_p c_{pj}.
	\end{align*}
The vector $\alpha$ is thus solution of the equation $\boldsymbol\tau = C \alpha$ and the expected result is proven since the covariance matrix $C$ is invertible.

\subsection{Proof of Proposition~\ref{lem:invC}}
\label{sec:proofinvC}
The correlation matrix $C$ is assumed to have the form $C = (1-c) I_p + c \indd \indd^t$. We show that the invert of $C$ can be decomposed in the same way. Let $M = a I_p + b \indd \indd^t$ where $a$ and $b$ are real numbers to be chosen later. Then

\begin{eqnarray*}
C M	&=& \big( (1-c)I_p + c \indd \indd^t \big) \big( a I_p + b \indd \indd^t \big) \\
			&=& a (1-c) I_p + b (1-c) \indd \indd^t + a c \indd \indd^t + b c \indd \indd^t \indd \indd^t \\
			&=& a (1-c) I_p + (b (1-c) + ac + pbc )\indd \indd^t,
\end{eqnarray*}
since $\indd^t \indd = p$. Thus, $C M = I_d$ if and only if
\[
\left\{
	\begin{array}{l}
	a (1-c) = 1 \\
	b (1-c) + ac + pbc = 0,
	\end{array}
\right.
\]
which is equivalent to
\[
\left\{
	\begin{array}{l}
	a = \dfrac{1}{(1-c)} \\
	b = \dfrac{- c}{(1-c)(1-c+pc)}.
	\end{array}
\right.
\]
Consequently,  $M^{-1}_{jk} = C^{-1}_{jk} = b$ if $j \neq k$ and $M^{-1}_{jk} = C^{-1}_{jj} = a+b$. Finally we find that for any $j \in \{1\dots p\}$:
\begin{eqnarray*}
[C^{-1} \boldsymbol\tau]_j 	&=& \tau_0 (a+b) + \tau_0 b (p-1)   \\
						&=& \tau_0 ( a + pb )  \\
						&=& \tau_0 \bigg( \dfrac{1}{(1-c)} -  \dfrac{pc}{(1-c)(1-c+pc)} \bigg) \\
						&=& \dfrac{\tau_0}{1 - c + pc}.
\end{eqnarray*}
The second point derives from Proposition~\ref{prop:Gauss}.

\section*{Acknowledgements}
The authors would like to thank G\'erard Biau for helpful discussions and Cathy Maugis for pointing us the Landsat Satellite dataset. The authors also thank the two anonymous referees for their many helpful comments and valuable suggestions.

\bibliographystyle{plainnat}
\bibliography{biblio}

\end{document}